\newcommand{\bbC}{{\mathbb{C}}}
\newcommand{\bbD}{{\mathbb{D}}}
\newcommand{\bbQ}{{\mathbb{Q}}}
\newcommand{\bbR}{{\mathbb{R}}}
\newcommand{\bbS}{{\mathbb{S}}}
\newcommand{\bbT}{{\mathbb{T}}}
\newcommand{\bbZ}{{\mathbb{Z}}}
\newcommand{\calH}{{\mathcal{H}}}
\newcommand{\SL}{{\mathbb{SL}}}
\newcommand{\SU}{{\mathbb{SU}}}
\renewcommand{\Im}{\operatorname{Im}}
\newcommand{\idty}{\mathbbm 1}
\newcommand{\aubrydual}{{\sharp}}
\newcommand{\DC}{{\mathrm{DC}}}
\newtheorem{theorem}{Theorem}[section]
\newtheorem{coro}[theorem]{Corollary}
\newtheorem{definition}[theorem]{Definition}
\newtheorem{remark}[theorem]{Remark}
\DeclareMathOperator{\rot}{rot}
\numberwithin{equation}{section}
\def\subsection{\@startsection{subsection}{2}%
	\z@{.5\linespacing\@plus.7\linespacing}{.5\linespacing}%
	{\normalfont\scshape\centering}}
\title[Twenty Dry Martinis for the UAMO]{Twenty Dry Martinis for the \\unitary Almost Mathieu Operator}
\author[C.\ Cedzich]{Christopher Cedzich}
\email{\href{mailto:cedzich@hhu.de}{cedzich@hhu.de}}
\address{Heinrich Heine Universit\"at D\"usseldorf, Universit\"atsstr. 1, 40225 D\"usseldorf, Germany}
\author[L.\ Li]{Long Li}
\email{\href{mailto:ll106@rice.edu}{longli@rice.edu}}
\address{Department of Mathematics, Rice University, Houston, TX 77005, USA}
\begin{document}

\begin{abstract}
We solve the Dry Ten Martini Problem for the unitary almost Mathieu operator with Diophantine frequencies in the non-critical regime.
\end{abstract}

\maketitle

\section{Introduction}
The famous ``Ten Martini Problem'' was initially posed by Kac who promised ten Martinis for the solution of the following problem \cite{Kac_martinis}: Consider the almost Mathieu operator (AMO)
\begin{equation}\label{eq:AMO}
    (H_{\lambda,\Phi,\theta} \psi)(n)=\psi(n+1)+\psi(n-1)+2\lambda \cos2\pi(\theta+n\Phi)\psi(n),
\end{equation}
with coupling constant $\lambda>0$, frequency $\Phi\in\bbR\setminus\bbQ$ and phase $\theta\in\bbT$.
Kac asked ``are all the gaps [in the spectrum of the AMO] there?'' or, in other words, ``is the spectrum of the AMO a Cantor set?''. This problem was made public by Simon \cite[Problem 1]{simonAlmostPeriodicSchrodinger1982}. It was completely solved by Avila-Jitomirskaya only decades later \cite{AJ09}, building on considerable effort with partial progress along the way, compare \cite{bellissardCantorSpectrumAlmost1982,choiGaussPolynomialsRotation1990,Puig2004CMP}. For an in-depth overview over the development we refer the reader to the surveys \cite{marxDynamicsSpectralTheory2017,damanik2024two}.

The ``Dry Ten Martini Problem" originally asks the natural follow up question \cite[Problem 2]{simonAlmostPeriodicSchrodinger1982}: {\it are all gaps in the spectrum of the AMO open?} Over the years, substantial partial progress for various subsets of parameters was made by Choi-Elliott-Yui \cite{choiGaussPolynomialsRotation1990}, Puig \cite{Puig2004CMP}, Avila-Jitomirskaya \cite{AJ2012JEMS} and Avila-You-Zhou \cite{avila2023dry}, yet, the critical case $\lambda=1$ with $\Phi$ a Diophantine number is still open. Of course, one can ask the (Dry) Ten Martini Problem for any model of interest. Recent progress has been made for the extended Harper's model \cite{hanDryTenMartini2017}, Sturmian Hamiltonians \cite{band2024dry,bandReviewWorkRaymond2024,raymond1997constructive} and $C^2$- and $\cos$-type sampling functions \cite{WangZhang2017, FV21}. Research along this line turns out to be very fruitful, for example, the discovery of a robust property \cite{GJY} or the construction of explicit examples \cite{HSY2019,HHSY2024}.

We mention in passing that one might take another step in the direction of abstraction and ask the question, how, for a family of random operators defined over an ergodic dynamical system, the base dynamics determines the topological structure of the almost sure spectrum. Clearly, the Dry Ten Martini Problem is a special case of this ``all gaps open'' problem. As for operators with quasiperiodic base dynamics such as the AMO, the set of possible gap labels is determined by the so-called Gap Labeling Theorem \cite{bellissardGapLabellingTheorems1992,bellissardKtheoryAlgebrasSolid1986,damanikGapLabellingDiscrete2023, DEF2024,johnsonRotationNumberAlmost1982a,GeronimoJohnson1996JDE}. The converse of this is the question whether a label predicted by the Gap Labeling Theorem labels an actual gap, i.e., a gap that is not collapsed or degenerate. This converse problem is in general very difficult since both the base dynamics and the sampling functions can affect the topological structure of the spectrum, compare \cite{ABD12}.

A crucial ingredient in the classical proof of the Dry Ten Martini Problem for the AMO is the \emph{Aubry-André} duality \cite{AJ2012JEMS,Puig2004CMP}, that is, a duality between different elements of the family $\{H_\lambda\}$ with respect to a twisted Fourier transform.
In the general case, i.e., without appealing to this duality, the property that ``all gaps are open'' is expected to hold only in generic sense \cite{ABD09,ABD12,DL24,PUIG_SIMO_2006}. On the other hand, whenever one has this duality available for a concrete operator, one can expect the Dry Ten Martini problem to hold.

We here solve the Dry Ten Martini Problem for a model that goes beyond the class of quasiperiodic Schrödinger operators but for which one nevertheless can prove an André-Aubry duality: the so-called \emph{unitary almost Mathieu operator} (UAMO) is a quantum walk that describes the motion of a single particle with two-dimensional internal degree of freedom on the integers in discrete time steps. The dynamics is described by the alternation of two types of operators: a parametrized shift and a ``coin'' operator that locally acts via a matrix-valued quasi-periodic function. This model was introduced and studied in \cite{CFO23}, see also \cite{fillmanSpectralCharacteristicsUnitary2017,CFGW19}. There, among other results concerning the spectral types in various parameter regimes, the Ten Martini Problem was solved in the critical setting for all irrational frequencies. We here advance this result in two important directions: on the one hand, we ``dry'' it, and on the other, we extend it to the non-critical setting.

\section{The model and results}
In this work, we establish the Dry Ten Martini property for a family of quasi-periodic operators that was introduced in \cite{CFO23} and dubbed \emph{unitary almost-Mathieu operator} (UAMO) due to its striking similarities with the almost-Mathieu operator $H_{\lambda,\Phi,\theta}$ defined in \eqref{eq:AMO}. The UAMO is a split-step quantum walk $W_{\lambda}=S_\lambda Q$ on $\calH=\ell^2(\bbZ)\otimes\bbC^2$ that is defined in terms of two operators: 
$S_\lambda$ is the conditional shift operator
\begin{equation*}
    S_\lambda:\delta^\pm_{n}\mapsto\lambda\delta^\pm_{n\pm1}\pm\lambda'\delta^\mp_n,\qquad\lambda\in[0,1],\:\lambda'=\sqrt{1-\lambda^2},
\end{equation*}
and $Q$ is a ``coin'' operator that acts coordinate-wise via a $2\times2$ unitary $Q_n$. For the UAMO, these local coins are quasi-periodically distributed according to the rule
\begin{equation} \label{eq:coindef}
Q_{\lambda,n}
=
Q_{\lambda,n,\Phi,\theta}
=
Q_{\lambda,2\pi(n\Phi + \theta)},
\quad
n \in \bbZ,
\end{equation}
where $Q_{\lambda,\theta}$ denotes a modified counterclockwise rotation through the angle $\theta$, i.e.,
\begin{equation}\label{eq:rotation_coin}
Q_{\lambda,\theta}
=
\begin{bmatrix}
\lambda\cos(\theta)+i\lambda' & -\lambda\sin(\theta) \\
\lambda\sin(\theta) &  \lambda\cos(\theta)-i\lambda'
\end{bmatrix},
\qquad
\theta \in \bbT:=\bbR/\bbZ,\:\lambda\in[0,1],
\end{equation}
and we abbreviated $\lambda'=\sqrt{1-\lambda^2}$. 

With these two building blocks, the UAMO is defined as
\begin{equation}\label{eq:UAMO_def}
    W_{\lambda_1,\lambda_2,\Phi,\theta}:=S_{\lambda_1}Q_{\lambda_2,\Phi,\theta},
\end{equation}
and we shall abbreviate as $W_{\lambda_1,\lambda_2}$ for fixed $\Phi$ and $\theta$.
As discussed in detail in \cite[Section 3]{CFO23}, $\Phi$ plays the role of a magnetic field in an associated two-dimensional model, and $\theta$ plays the role of a Fourier parameter of the second lattice dimension. We shall nevertheless stick with the standard nomenclature in dynamical systems and call $\Phi$ the \emph{frequency} and $\theta$ the \emph{phase} and refer to $\lambda_1$ and $\lambda_2$ as \emph{coupling constants}, as they determine the ``strength'' of the shift and the coin, respectively.

\medskip
The UAMO displays a metal-insulator phase transition with respect to the coupling constants for almost all $\Phi$ and $\theta$: for $\lambda_1>\lambda_2$ the spectrum of $W_{\lambda_1,\lambda_2}$ is purely absolutely continuous, for $0<\lambda_1=\lambda_2\leq1$ it is purely singular continuous while for $\lambda_1<\lambda_2$ it is pure point with exponentially decaying eigenfunctions\footnote{Note that in certain regimes, the ``almost all'' can be lifted to ``all'', for details see \cite[Theorem 2.2]{CFO23}.}. This follows from a characterization of the associated eigenfunction cocycle which in the nomenclature of Avila's global theory \cite{Avila1} results in the following spectral phase diagram \cite{CFO23}:

\definecolor{myblue}{RGB}{100,100,220}
\definecolor{myred}{RGB}{255,100,100}
\definecolor{mygreen}{RGB}{119,221,119}
\definecolor{myyellow}{RGB}{253,253,150}
\definecolor{myorange}{RGB}{255,200,100}
\def\legendwidth{1.75}
\def\legendhight{0.32}
\def\legendposx{1.7}
\def\legendposy{.5}
\def\circlerad{.8ex}
\def\circleshift{-.7ex}
\newcommand\Square[1]{+(-#1,-#1) rectangle +(#1,#1)}
\def\colorsquare#1{\tikz[baseline=\circleshift]\draw[#1,fill=#1,rotate=45] (0,0) \Square{\circlerad};}

\smallskip
\begin{center}
\begin{minipage}{.4\textwidth}
\begin{description}
\item[\colorsquare{myblue}\ \ Subcritical] $\lambda_1>\lambda_2$
\smallskip
\item[\colorsquare{myred}\ \ Critical] $\lambda_1=\lambda_2$
\smallskip
\item[\colorsquare{myorange}\ \ Supercritical] $\lambda_1 < \lambda_2$
\end{description}
\end{minipage}
\begin{minipage}{.2\textwidth}
	\begin{tikzpicture}
		[
		scale=1.5,
		font=\footnotesize
		]
				
		\draw[thick,black] (-.05,-.05) rectangle +(1.1,1.1);
		
		\foreach \i in {0,1}{
			\draw[align=left] (-.02,{\i}) -- (-.08,{\i});
			\draw[align=left] ({\i},-.02) -- ({\i},-.08);
		}
		
		\draw (-.05,0) node[left, align=left] {$0$};
		\draw (-.05,1) node[left, align=left] {$1$};
		\draw (0,-.05) node[below, align=center] {$0$};
		\draw (1,-.05) node[below, align=center] {$1$};
		
		\draw (0,.5) node[left,align=right] {$\lambda_2$};
		\draw (.5,0) node[below,align=center] {$\lambda_1$};
		
		\path[preaction={fill=white}, pattern=north west lines, pattern color=myblue] (0,0) -- (1,1) -- (1,0) -- cycle;
		\path[preaction={fill=white}, pattern=north east lines, pattern color=myorange] (0,0) -- (1,1) -- (0,1) -- cycle;
		
		\draw[very thick, myred] (0,0) -- (1,1);
	\end{tikzpicture}
\end{minipage}
\end{center}

Another important ingredient in these proofs is a unitary version of André-Aubry duality, which relates $W_{\lambda_1,\lambda_2}$ to its ``dual''
\begin{equation*}
    W_{\lambda_1,\lambda_2}^\aubrydual:=W_{\lambda_2,\lambda_1}^\top,
\end{equation*}
and which immediately implies that the spectra of $W_{\lambda_1,\lambda_2}$ and $W_{\lambda_2,\lambda_1}$ are the same.
\smallskip

The arithmetic properties of $\Phi$ play a crucial role in determining spectral properties of the underlying operator. We call $\Phi$ \emph{Diophantine} if there exist $\kappa>0,\tau>1$ such that
\begin{equation}\label{eq.dioCond}
    \|n \Phi\|_{\bbT}:= \inf_{p\in\bbZ} |n \Phi-p|  \geq\frac\kappa{|n|^{\tau+2}}\quad\forall n\neq0.
\end{equation}
In this case, we write $\Phi\in \DC(\kappa,\tau)$. Moreover, we shall denote the set of all Diophantine frequencies by
\begin{equation}\label{eq:dioph}
    \DC=\bigcup_{\kappa>0,\tau>1}\DC(\kappa,\tau),
\end{equation}
which is known to have full Lebesgue measure as a subset of $\bbT$. 
More generally, we say $\rho$ is \emph{Diophantine with respect to $\Phi$} if there exists some positive constants $\kappa',\tau'$ such that
$$
\|\rho - n\Phi\|_{\bbT} \geq \frac{\kappa'}{|n|^{\tau'}}, \qquad \forall \;  0 \neq n \in \mathbb{Z},
$$
and $\rho$ is \emph{rational with respect to $\Phi$} if $\rho - n\Phi\in\mathbb{Q}$ for some $n \in \mathbb{Z}$ rational.

It turns out that the spectrum of $W_{\lambda_1,\lambda_2,\Phi,\theta}$ is independent of $\theta$ by the minimality of the rigid translation $\theta\to \theta+\Phi, \Phi\in \bbR\setminus\bbQ.$ We denote this common spectrum by $\Sigma_{\lambda_1,\lambda_2,\Phi}$.  Our main result is the following:

\begin{theorem}\label{thm.Main1}
For $\Phi\in \DC$ and $\lambda_1\neq\lambda_2$, all gaps in the spectrum of the unitary almost Mathieu operator $W_{\lambda_1,\lambda_2,\Phi}$ allowed by the Gap Labeling Theorem are open. 
\end{theorem}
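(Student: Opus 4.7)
The plan is to adapt Puig's classical Dry Martini argument for the AMO \cite{Puig2004CMP} to the unitary setting, using the Aubry-Andr\'e duality from \cite{CFO23}. Since $\Sigma_{\lambda_1,\lambda_2,\Phi}=\Sigma_{\lambda_2,\lambda_1,\Phi}$, it suffices to treat the subcritical regime $\lambda_1>\lambda_2$: a gap is open in one if and only if it is open in the other. In this regime, the eigenfunction cocycle has vanishing Lyapunov exponent and is subcritical in the sense of Avila's global theory \cite{Avila1,CFO23}. Adapting Avila's almost reducibility theorem to the $\mathrm{U}(1,1)$-valued cocycle of the UAMO then yields that the cocycle is almost reducible at every $E\in\Sigma_{\lambda_1,\lambda_2,\Phi}$.

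The first substantive step is to promote almost reducibility to \emph{exact} reducibility at gap edges. By the Gap Labeling Theorem for the UAMO, every gap of $\Sigma_{\lambda_1,\lambda_2,\Phi}$ carries a label $k\in\bbZ$ such that the fibered rotation number of the cocycle on that gap equals $k\Phi\bmod\bbZ$ (up to the normalization appropriate for unitary CMV-type cocycles). Suppose for contradiction that the gap labeled $k$ collapses to a single point $E_0=e^{i\omega_0}\in\Sigma_{\lambda_1,\lambda_2,\Phi}$; then the fibered rotation number at $E_0$ is $k\Phi\bmod\bbZ$, which is rational with respect to $\Phi$. Combined with $\Phi\in\DC$, a unitary Eliasson-type theorem then produces a continuous conjugation $B:\bbT\to\mathrm{U}(1,1)$ bringing the cocycle at $E_0$ to a constant matrix $A_\ast$.

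We rule out the collapse by case analysis on $A_\ast$: hyperbolic $A_\ast$ would place $E_0$ outside the spectrum; elliptic $A_\ast$ whose rotation angle is incompatible with $k\Phi\bmod\bbZ$ contradicts the rotation number computation; a parabolic non-trivial Jordan block would make the cocycle hyperbolic on exactly one side of $E_0$, turning $E_0$ into a one-sided spectral edge and so contradicting collapse from both sides. The only remaining possibility is $A_\ast=\mathrm{id}$, which yields two linearly independent bounded, quasiperiodic Bloch-wave solutions at $E_0$. Applying the Aubry duality $\aubrydual$ to these Bloch waves (at the dual phase $\theta'$ determined by the localization center associated with label $k$) produces two linearly independent eigenfunctions of the supercritical dual $W_{\lambda_1,\lambda_2}^\aubrydual=W_{\lambda_2,\lambda_1}^\top$ at the eigenvalue $E_0$. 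Anderson localization for the supercritical UAMO \cite{CFO23}, combined with the standard Wronskian-type argument afforded by the $\mathrm{U}(1,1)$ conservation law --- two linearly independent $\ell^2$ solutions at a common spectral value force the constant Wronskian to vanish, contradicting linear independence --- then yields the sought contradiction.

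The main obstacle I anticipate is the reducibility step: transferring Eliasson's KAM scheme from the $\SL(2,\bbR)$ setting to the $\mathrm{U}(1,1)$ cocycle of the UAMO requires careful handling of the block structure and of the unitary symmetries, and one must in particular verify that almost reducibility in the subcritical regime can be promoted to exact reducibility at rotation numbers rational with respect to $\Phi$. A secondary, more technical, issue is the precise bookkeeping between the Gap Labeling normalization in the CMV/UAMO setting and the fibered rotation number of the transfer cocycle, including the correct identification of the ``trivial'' label and any factor-of-two offsets; once these are aligned, the Aubry-duality contradiction proceeds by the now-standard pattern of \cite{Puig2004CMP,AJ2012JEMS}.
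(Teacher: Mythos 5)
Your proposal follows essentially the same strategy as the paper: Avila's almost reducibility in the non-critical regime, Eliasson's local reducibility to a constant at a rational-with-respect-to-$\Phi$ rotation number, the observation that at a collapsed gap the reducibility constant must be the identity, and finally Aubry duality transforming the two resulting Bloch solutions into two linearly independent decaying eigenfunctions of the supercritical dual, contradicting the simplicity of the point spectrum via a Wronskian/determinant argument and the arithmetic Anderson localization of Theorem \ref{thm:AndersonLocal}. The only differences are cosmetic and organizational: you take $\lambda_1>\lambda_2$ and reduce the direct cocycle $S^+_z$, whereas the paper takes $\lambda_1<\lambda_2$ and reduces the dual cocycle $S^\aubrydual_z$ (both require the same ``Bloch-wave to $\ell^2$'' direction of duality, which the paper supplies as Theorem \ref{thm.reverDual}); and your informal case analysis on $A_*$ (hyperbolic/elliptic/parabolic) is exactly what the paper packages and proves as Theorem \ref{thm:gapEdgeRd}, where the elliptic case is excluded at rotation numbers rational with respect to $\Phi/2$ by tracking the winding number of the conjugation, an accounting point your sketch leaves implicit but which the paper makes explicit via Eliasson's theorem and the references to \cite{BPS03}.
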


\begin{remark}\label{rem:twenty}
    The reason for the ``twenty'' instead of the usual ``ten'' in the title of the paper comes from the observation that we have ``twice as many'' gaps as for the AMO: phenomenologically, this can be seen in Figure \ref{fig:Buttervogel} where we see that the spectrum of the UAMO consists of two copies of the butterfly. The reason for this is the prevalence of more symmetries: the spectrum of the AMO is real and symmetric about $0$ due to the involutive symmetry that multiplies locally by $(-1)^n$ and shifts $\theta\mapsto \theta+1/2$. The UAMO has spectrum on the unit circle. It has the same symmetry as the AMO, which amounts to a reflection about the origin. In addition, the spectrum possesses an axial symmetry about the real axis due to complex conjugation \cite{CFGW19}.
\end{remark}

\begin{figure}
    \centering
    \includegraphics[width=.6\textwidth]{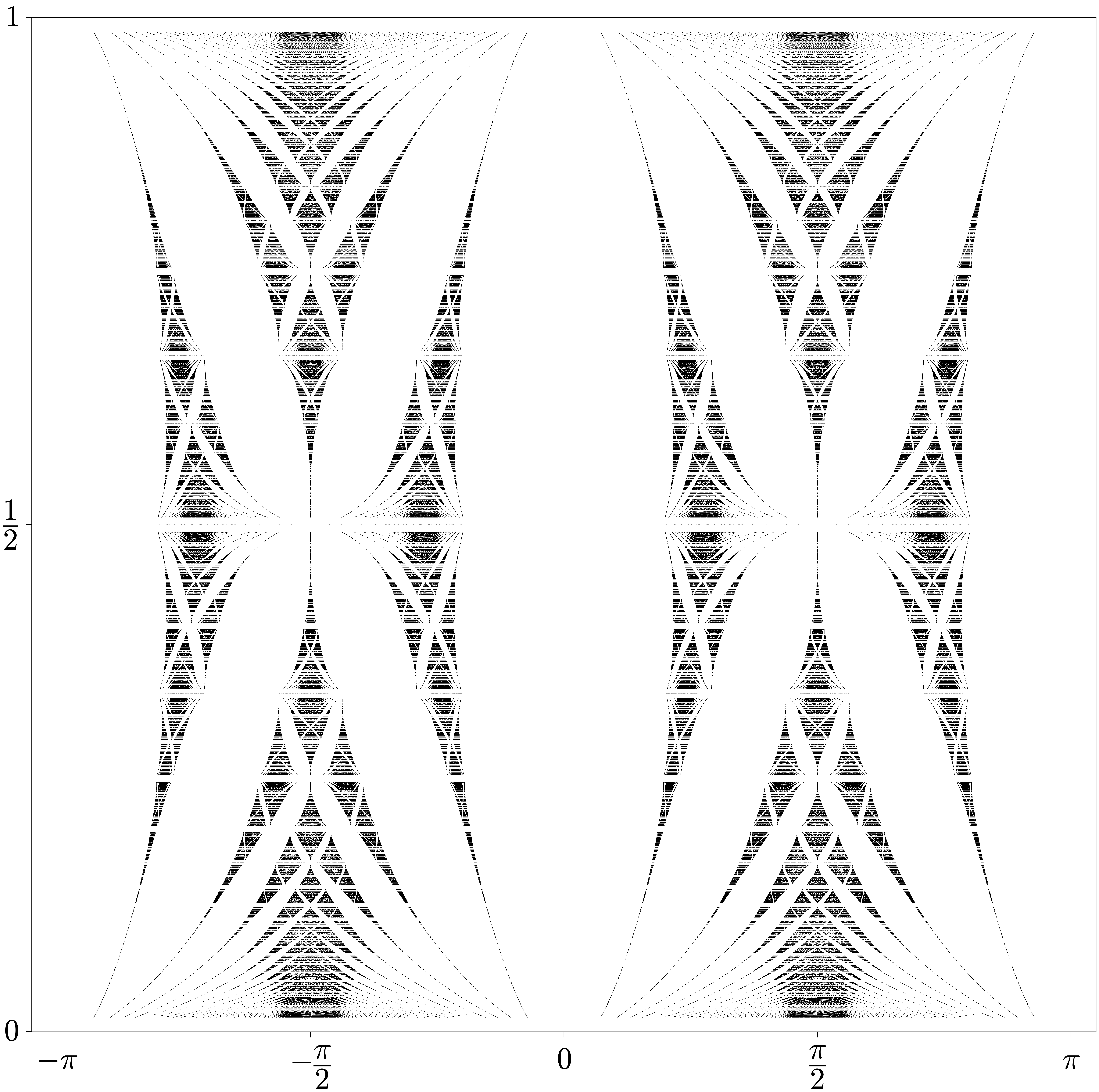}
    \caption{The ``Hofstadter butterfly'' for the UAMO in the subcritical regime with $(\lambda_1,\lambda_2)=(1/\sqrt{2},1/\sqrt{3})$ and denominators up to 70. Clearly, there are two butterflies: for every denominator $q$ there are $2q$ bands instead of just $q$ as for the original butterfly \cite{hof76}. This is rooted in the symmetries of the system: for every $z\in\Sigma_{\lambda_1,\lambda_2,\Phi}$ also $z^*\in\Sigma_{\lambda_1,\lambda_2,\Phi}$ and $-z\in\Sigma_{\lambda_1,\lambda_2,\Phi}$, compare Remark \ref{rem:twenty}.}
    \label{fig:Buttervogel}
\end{figure}

\begin{remark}Let us make some comments about Theorem \ref{thm.Main1}:
\begin{enumerate}[itemsep=1ex]
    \item By \cite[Theorem 2.2 (d)]{CFO23} the Ten Martini property holds in the critical case for all irrational frequencies, that is, for $0<\lambda_1=\lambda_2\leq1$ the spectrum of $W_{\lambda_1,\lambda_2,\Phi}$ is a Cantor set of zero Lebesgue measure for every irrational $\Phi$. The dry version of the problem remains open.
    \item In the non-critical setting, the Dry Ten Martini problem for non-Diophantine $\Phi$ is notoriously hard. For the critical AMO with Liouville frequencies, this was proved in \cite{choiGaussPolynomialsRotation1990} by employing the particular structure and symmetries of the operator. The more involved structure of the UAMO seems to render impossible an adaption of the argument in \cite{choiGaussPolynomialsRotation1990}. We nevertheless expect the DTMP to hold for all frequencies, see \cite{avila2023dry} for a possible attack strategy. 
    \item We combine Avila's almost reducible theorem with Eliasson's reducibility for Diophantine frequencies to obtain the global reducibility except for the critical case. Therefore, this treatment does not cover the Liouville frequencies.
    \item This is the first time a DTMP is showed for (GE)CMV matrices. Some results in Baire category for certain classes of almost periodic extended CMV matrices were previously obtained by \cite{LDZ2022JFA,DL24}. It is an open question whether it holds for other CMV matrices with, for example, subshift or Sturmian Verblunsky coefficients.
\end{enumerate}
\end{remark}

We base our proof on the recent understanding of Anderson localization for Diophantine frequencies obtained in \cite{CFLOZ23} and techniques developed therein:
the Anderson localization for the UAMO in the supercritical setting $\lambda_1<\lambda_2$ proved in \cite{CFO23} is a full measure result. In \cite{CFLOZ23}, an arithmetic version of Anderson localization is proved, albeit for a mosaic model where every other local coin in \eqref{eq:coindef} is trivial. However, the proof of \cite{CFLOZ23} works in a straightforward way for UAMO as well, compare also \cite{FanYang}.

\begin{theorem}\label{thm:AndersonLocal}
Let $\Phi\in \DC(\kappa,\tau)$ and $\lambda_1<\lambda_2$. Then for each ``\,$\Phi$-nonresonant'' $\theta$, i.e., each $\theta$ such that 
\begin{equation*}
    |\sin2\pi(\theta+n\Phi)|<\exp(-|n|^{\frac{1}{2\tau}})
\end{equation*}
does not hold for infinitely many $n$, $W_{\lambda_1,\lambda_2,\Phi,\theta}$ admits Anderson localization.
\end{theorem}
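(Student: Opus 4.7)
The plan is to adapt the arithmetic Anderson localization argument from \cite{CFLOZ23}, originally developed for the mosaic variant of the UAMO, to the model $W_{\lambda_1,\lambda_2,\Phi,\theta}$ at hand. The overall strategy is a unitary analog of Jitomirskaya's Green's function method: in the supercritical regime $\lambda_1<\lambda_2$, the transfer matrix cocycle associated with the generalized eigenvalue equation has positive Lyapunov exponent (already established in \cite{CFO23} via Avila's global theory). Combined with a quantitative large deviation estimate coming from the Diophantine condition on $\Phi$ and a Jitomirskaya-style elimination of double resonances powered by the non-resonance hypothesis on $\theta$, this positivity is expected to yield exponential decay of every generalized eigenfunction.

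Concretely, I would proceed in four steps. First, recast $W_{\lambda_1,\lambda_2,\Phi,\theta}\psi = z\psi$ as a two-term difference equation whose solutions are propagated by an analytic $\SL(2,\bbC)$-cocycle over the rotation $\theta\mapsto\theta+\Phi$. Second, establish a uniform large deviation theorem for this cocycle at scale $n$, valid off an exceptional set of phases of measure at most $e^{-n^{\sigma}}$ for some $\sigma>0$. Third, convert this LDT into Green's function estimates on finite intervals $[a,b]$ with ``regular'' endpoints, producing off-diagonal decay at the Lyapunov rate. Fourth, invoke Schnol's theorem in the unitary setting to reduce to polynomially bounded generalized eigenfunctions, and use the arithmetic condition on $\theta$ to show that every such eigenfunction admits good block decompositions that force exponential decay. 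The specific exponent $\frac{1}{2\tau}$ in the non-resonance hypothesis is tuned so that the set of resonant sites $\{n:|\sin 2\pi(\theta+n\Phi)|<\exp(-|n|^{1/(2\tau)})\}$ is sparse enough for the arithmetic elimination to beat any polynomial growth of the eigenfunction.

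The main technical obstacle lies in Step 2, where one must verify that the sharp cocycle estimates from \cite{CFLOZ23}, which exploit the mosaic block structure (trivial coins at odd sites), continue to hold when all sites carry the non-trivial coin $Q_{\lambda_2,2\pi(n\Phi+\theta)}$ in \eqref{eq:coindef}. Here the decisive tool is the André-Aubry duality $W_{\lambda_1,\lambda_2}^\aubrydual = W_{\lambda_2,\lambda_1}^\top$: in the supercritical regime, the dual operator is subcritical and therefore analytically almost reducible by Avila's almost reducibility theorem, which is exactly the input required for the quantitative LDT. Once this duality-based LDT is in place, the remaining arithmetic localization scheme is essentially parallel to \cite{CFLOZ23}, and the ``straightforward'' adaptation alluded to in the excerpt, cross-referenced with \cite{FanYang}, completes the proof.
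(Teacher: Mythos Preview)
Your four-step outline is essentially the correct strategy and matches the paper, which records only the positivity of the Lyapunov exponent
\[
L_{\lambda_1,\lambda_2,\Phi}(z)\geq \log\left[\frac{\lambda_2(1+\lambda_1')}{\lambda_1(1+\lambda_2')}\right]>0
\]
from \cite[Theorem~2.9]{CFO23} and then defers entirely to the argument of \cite[Theorem~6.3]{CFLOZ23}.

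However, you misidentify the mechanism behind Step~2. The large deviation estimate for the supercritical cocycle does \emph{not} come from Aubry--Andr\'e duality and almost reducibility of the dual; that machinery belongs to the \emph{subcritical} side and is used in this paper only for the Puig-type reducibility argument in the proof of Theorem~\ref{thm.Main1}. In the supercritical regime the LDT follows directly from the positive Lyapunov exponent together with analyticity of the cocycle and the Diophantine condition on $\Phi$, via the standard subharmonic/avalanche-principle route. No property of the dual enters, and almost reducibility plays no role in the localization proof. This is precisely why the paper singles out the explicit Lyapunov lower bound \eqref{eq:positiveLyapunov} as the sole model-specific input: once that bound is in hand, the Green's function machinery of \cite{CFLOZ23} applies verbatim.

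Relatedly, your concern that the ``mosaic block structure'' of \cite{CFLOZ23} is essential for the sharp cocycle estimates is misplaced. The LDT and the subsequent Green's function bounds hold for any analytic one-frequency cocycle with positive Lyapunov exponent over a Diophantine rotation; the mosaic feature in \cite{CFLOZ23} affects only peripheral combinatorics, not the core analytical inputs. So Step~2 is not the obstacle you describe, and the adaptation really is straightforward once \eqref{eq:positiveLyapunov} is established.
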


\begin{proof}
In the case $\Phi\in\bbR\setminus\bbQ$ and $\lambda_1<\lambda_2$, according to \cite[Theorem 2.9]{CFO23}, the Lyapunov exponent characterizing the (typical) decay of generalized eigenfunctions is positive:
\begin{equation}\label{eq:positiveLyapunov}
L_{\lambda_1,\lambda_2,\Phi}(z)\geq \log\left[\frac{\lambda_2(1+\lambda_1')}{\lambda_1(1+\lambda_2')}\right]>0,
\end{equation}
with equality if and only if $z\in\Sigma_{\lambda_1,\lambda_2,\Phi}.$ The rest of the proof follows the same outline as the proof of \cite[Theorem 6.3]{CFLOZ23}.
\end{proof}

\begin{remark}
    This result is a full measure result in $\theta$. It is sharp in the sense that it cannot be strengthened to all $\theta$ \cite{cedzich2024absence}.
\end{remark}

We shall also need the following dynamical duality formulation of Autry-André duality for the UAMO, which can be seen as the reverse statement to \cite[Theorem 2.4]{CFO23}. As such, we expect it to be of interest beyond this paper.
\begin{theorem}[Aubry-Andr\'e Duality]\label{thm.reverDual}
Let $\varphi=\varphi^\xi=\left[\varphi^{\xi,+},\varphi^{\xi,-}\right]^\top$, $\xi\in\mathbb{T}$ be a solution to the generalized eigenvalue equation $W^{\aubrydual}_{\lambda_{1},\lambda_{2},\xi,\Phi}\varphi=z\varphi$ which has the following form 
\begin{equation*}
	\begin{bmatrix}\varphi^{\xi,+}_{n}\\\varphi^{\xi,-}_{n}\end{bmatrix}=e^{2\pi in\theta}\begin{bmatrix}\check{\phi}^{+}(\xi+n\Phi)\\\check{\phi}^{-}(\xi+n\Phi)\end{bmatrix}=\frac1{\sqrt2}e^{2\pi in\theta}\begin{bmatrix}\check\psi^+(\xi+n\Phi)+i\check\psi^-(\xi+n\Phi)\\i\check\psi^+(\xi+n\Phi)+\check\psi^-(\xi+n\Phi)\end{bmatrix}.
\end{equation*}
Let \begin{equation}\label{eq:psi_aubr}
	\begin{bmatrix}\check{\psi}^{+}\\\check{\psi}^{-}\end{bmatrix}=\frac1{\sqrt2}\begin{bmatrix}1&-i\\-i&1\end{bmatrix}\begin{bmatrix}\check{\phi}^{+}\\\check{\phi}^{-}\end{bmatrix}
\end{equation}
with $n$-th Fourier coefficients $\psi^{+}_{n}$ and $\psi^{-}_{n}$, respectively. Then $\psi=\left[\psi^{+},\psi^{-}\right]^\top$ solves the eigenvalue equation $W_{\lambda_{1},\lambda_{2},\Phi,\theta}\psi=z\psi$.
\end{theorem}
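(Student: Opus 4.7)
The plan is to view this theorem as the exact converse of \cite[Theorem 2.4]{CFO23}: that reference produces, from an eigenfunction of $W_{\lambda_1,\lambda_2,\Phi,\theta}$, a solution of the dual equation of the precise form appearing in the hypothesis; the task here is to show that the construction can be run in reverse. Accordingly, the strategy is to retrace each manipulation of the forward duality and verify that every transformation is an equivalence rather than merely an implication. Since the statement is purely algebraic — no $\ell^2$ or decay hypotheses are imposed — this amounts to a computation that keeps careful track of a few phase factors, Fourier conventions, and a single unitary basis change.

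Concretely, I would first expand $W^{\aubrydual}_{\lambda_1,\lambda_2,\xi,\Phi} = Q_{\lambda_1,\xi,\Phi}^\top S_{\lambda_2}^\top$ using the explicit forms \eqref{eq:coindef}--\eqref{eq:rotation_coin}, and write the eigenvalue equation $W^\aubrydual\varphi=z\varphi$ componentwise at each site $n$. Substituting the ansatz $\varphi_n^\pm = e^{2\pi i n\theta}\check\phi^\pm(\xi+n\Phi)$ converts each unit shift in $n$ into a product of a multiplicative phase $e^{\pm 2\pi i\theta}$ and an argument-shift $\xi\mapsto\xi\pm\Phi$ inside $\check\phi^\pm$; because the frequency $\Phi$ is irrational, these relations at varying $n$ are all equivalent to the single functional identity obtained by setting $n=0$. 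Next I would apply the unitary $U=\tfrac{1}{\sqrt 2}\bigl[\begin{smallmatrix}1&-i\\-i&1\end{smallmatrix}\bigr]$ from \eqref{eq:psi_aubr} to rewrite the system in terms of $\check\psi^\pm$: the point of $U$ is that it intertwines the imaginary piece $\pm i\lambda_1'$ of $Q_{\lambda_1,\xi}^\top$ with the off-diagonal swap $\pm\lambda_2'$ of $S_{\lambda_2}^\top$, effectively trading coin and shift roles and yielding a functional identity in $\check\psi^\pm$ in which $\lambda_2$ now plays the role of the coin strength and $\lambda_1$ the role of the shift.

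Finally, I would pass to Fourier coefficients by expanding $\check\psi^\pm(\eta)=\sum_n\psi^\pm_n e^{2\pi in\eta}$. Under this transform, the $\Phi$-shifts of the argument become twist phases $e^{\pm 2\pi in\Phi}$ in the coefficients, while the multiplicative phases $e^{\pm 2\pi i\theta}$ picked up in the previous step become lattice shifts of the Fourier index $n$. Matching Fourier modes and comparing with the explicit action of $W_{\lambda_1,\lambda_2,\Phi,\theta}=S_{\lambda_1}Q_{\lambda_2,\Phi,\theta}$ coordinatewise, one verifies that the resulting identity is exactly the eigenvalue equation $W_{\lambda_1,\lambda_2,\Phi,\theta}\psi=z\psi$.

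The principal obstacle is not conceptual but notational: two spinor components, two shift directions, a $\theta$-twist, a $\Phi$-shift, a transpose in the definition of $W^\aubrydual$, and the basis change $U$ all interact, and one must be consistent about the sign of the Fourier exponent throughout. The cleanest way to organize the argument is to record once and for all, as a short lemma, how $U$ intertwines the shift and coin parts under the conjugation $X\mapsto U X U^{-1}$; once that intertwining is in hand, the remainder of the proof is a bookkeeping exercise that inverts each step of the forward duality in \cite[Theorem 2.4]{CFO23}.
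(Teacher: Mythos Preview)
Your plan is correct and is essentially the paper's own proof: write out $W^{\aubrydual}\varphi=z\varphi$ componentwise via \cite[Lemma~4.2]{CFO23}, insert the Bloch ansatz, Fourier transform in the variable $x=\xi+n\Phi$, take the linear combinations $(1,-i)$ and $(-i,1)$ dictated by $U$, and identify the result with the componentwise action of $W_{\lambda_1,\lambda_2,\Phi,\theta}$ from \cite[Lemma~4.1]{CFO23}. The only differences are cosmetic (the paper applies $U$ \emph{after} Fourier transforming rather than before, which commutes since $U$ is constant); one small slip in your description is that under the Fourier transform it is the trigonometric factors $\cos(2\pi\eta),\sin(2\pi\eta)$ from the coin---not the constant phases $e^{\pm 2\pi i\theta}$---that produce the lattice shifts $m\mapsto m\pm1$, while the phases $e^{\pm2\pi i\theta}$ combine with the $e^{\pm2\pi i m\Phi}$ twists to form the arguments $2\pi((m\pm1)\Phi+\theta)$ of the new coin.
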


\section{Preliminaries}

Our proof of Theorem \ref{thm.Main1} utilizes techniques from the theory of one-frequency cocycles of CMV matrices, which we hence review in this section to keep the present treatise as self-contained as possible. We first review the construction of so-called Cantero-Moral-Velázquez matrices (CMV matrices), whose intimate connection with quantum walks on the line was first discussed in \cite{canteroMatrixvaluedSzegoPolynomials2010} and further generalized in \cite{CFO23,CFLOZ23}. We then discuss the dynamics of the transfer matrix cocycle of the UAMO through that of the associated Szeg\H{o} cocycle. This cocycle is the natural one from a point of view of orthogonal polynomials on the unit circle \cite{Simon2005OPUC1,Simon2005OPUC2} and has the advantage that, without further ado, it lies in $\SU(1,1)$.

\subsection{The UAMO as a generalized extended CMV matrix}
Consider the Hilbert space $\ell^2(\bbZ)$ with the standard basis $\{\delta_n : n \in \mathbb{Z}\}$. On $\ell^2(\bbZ)$, we define \emph{generalized extended CMV} (GECMV) matrices $\mathcal{E} = \mathcal{E}(\alpha,\rho)$ by $\mathcal{E}=\mathcal{L}\mathcal{M}$, where $\mathcal{L}=\bigoplus_{n\in\mathbb{Z}}\Theta(\alpha_{2n},\rho_{2n})$ and $\mathcal{M}=\bigoplus_{n\in\mathbb{Z}}\Theta(\alpha_{2n+1},\rho_{2n+1})$ are specified by
\begin{equation}\label{eq:theta_mat}
	\Theta(\alpha,\rho)=\begin{bmatrix}\overline{\alpha}&\rho\\\overline{\rho}&-\alpha\end{bmatrix}
\end{equation}
with Verblunsky pairs
\begin{equation}\label{eq:verblunsky_pair}
	(\alpha,\rho)\in\mathbb{S}^3=\{(z_1,z_2)\in\overline{\mathbb{D}}^2:|z_1|^2+|z_2|^2=1\}.
\end{equation}

Each $\Theta(\alpha_j,\rho_j)$ acts unitarily on the subspace $\ell^2(\{j,j+1\})$, hence the blocks of $\mathcal{L}$ and $\mathcal{M}$ are shifted by one basis element with respect to each other. Hence, $\mathcal E$ has the matrix representation
\begin{equation} \label{eq:gecmv}
	\mathcal E 	= 
        \begin{bmatrix}
		\ddots & \ddots & \ddots & \ddots &&&&  \\
		& \overline{\alpha_0\rho_{-1}} & \boxed{-\overline{\alpha_0}\alpha_{-1}} & \overline{\alpha_1}\rho_0 & \rho_1\rho_0 &&&  \\
		& \overline{\rho_0\rho_{-1}} & -\overline{\rho_0}\alpha_{-1} & {-\overline{\alpha_1}\alpha_0} & -\rho_1 \alpha_0 &&&  \\
		&&  & \overline{\alpha_2\rho_1} & -\overline{\alpha_2}\alpha_1 & \overline{\alpha_3} \rho_2 & \rho_3\rho_2 & \\
		&& & \overline{\rho_2\rho_1} & -\overline{\rho_2}\alpha_1 & -\overline{\alpha_3}\alpha_2 & -\rho_3\alpha_2 &    \\
		&& && \ddots & \ddots & \ddots & \ddots &
	\end{bmatrix},
\end{equation}
where all unspecified matrix elements are zero and we boxed the $\langle\delta_0,\mathcal E\delta_0\rangle$ matrix element. 

``Generalized'' means that compared to the definition of extended CMV matrices given in \cite{Simon2005OPUC1,Simon2005OPUC2,canteroFivediagonalMatricesZeros2003} we admit the $\rho$'s to be complex. By \cite[Proposition 2.12]{CFO23}, resp. \cite[Theorem 2.1]{CFLOZ23}, one has the freedom of changing the phase of the $\rho$'s through a gauge transformation, that is, by conjugating with a diagonal unitary. This freedom has been fruitfully exploited already in several contexts to uncover hidden symmetries of the model \cite{cedzich2024absence,CFLOZ23}.
In particular, every GECMV matrix is gauge-equivalent to a ``standard'' extended CMV matrix with $\rho_n\in[0,1]$ for all $n\in\bbZ$. We shall assume that this gauge transformation has been carried out, i.e., the $\rho$'s are real, and denote the resulting standard extended CMV matrix also by $\mathcal{E}$. This is important, since we want the Szeg\H{o} transfer matrices $S_{n,z}$ defined below in \eqref{eq:szego_normalized} to be in $\mathrm{SU}(1,1)$ and the $\Theta$ in \eqref{eq:theta_mat} to be symmetric.

To study the spectral properties of $\mathcal{E}$, one naturally considers the generalized eigenvalue equation 
\begin{equation*}
    \mathcal{E}u=zu, \qquad z\in\bbC.
\end{equation*}
Solutions to this equation satisfy the following recurrence:
\begin{equation}\label{eq.solIter1}
	\begin{bmatrix}u_{2n+1}\\u_{2n}\end{bmatrix}=A_{n,z}\begin{bmatrix}u_{2n-1}\\u_{2n-2}\end{bmatrix}, \quad n \in \bbZ,
\end{equation}
where the \emph{eigenfunction transfer matrices} $A_{n,z}$ are given by
\begin{equation}\label{eq:GECMV_transmat}
	A_{n,z}=\frac{1}{\rho_{2n}\rho_{2n-1}}\begin{bmatrix}
		z^{-1}+\alpha_{2n}\overline{\alpha}_{2n-1}+\alpha_{2n-1}\overline{\alpha_{2n-2}}+\alpha_{2n}\overline{\alpha_{2n-2}}z& -\overline{\rho_{2n-2}}\alpha_{2n-1}-\overline{\rho_{2n-2}}\alpha_{2n}z\\
		-\rho_{2n}\overline{\alpha_{2n-1}}-\rho_{2n}\overline{\alpha_{2n-2}}z&\rho_{2n}\overline{\rho_{2n-2}}z
	\end{bmatrix},
\end{equation}
for $n \in \bbZ$ and $z \in \bbC \setminus \{0\}$. Since after gauge transforming the $\rho_n$'s in \eqref{eq:GECMV_transmat} are real, by \cite[Lemma 5.3]{CFLOZ23}, we have the following:
 \begin{equation}\label{eq.ConjugatedTransferMatrix}
    A_{n,z}=R_{2n}^{-1}JS^{+}_{n,z}JR_{2n-2},
\end{equation}
where $S^{+}_{n,z}=S_{2n,z}S_{2n-1,z}$ is determined by the \emph{normalized Szeg\H{o} transfer matrices}
\begin{equation}\label{eq:szego_normalized}
    S_{n,z}=\frac{z^{-\frac{1}{2}}}{\rho_{n}}\begin{bmatrix}z&-\overline{\alpha_{n}}\\-\alpha_{n}z&1\end{bmatrix} \in \SU(1,1),
\end{equation}
and
\begin{equation}\label{eq.Rn}
    R_{n}=\begin{bmatrix}1&0\\-\overline{\alpha_{n}}& \rho_{n} \end{bmatrix},\qquad J=\begin{bmatrix}0&1\\1&0\end{bmatrix}.
\end{equation}
\medskip

Consider the generalized eigenvalue equation of the transposed extended CMV matrix:
\begin{equation}\label{eq.dualEigen}
\mathcal{E}^\top v=zv.
\end{equation}
Since $\mathcal{L}$ and $\mathcal{M}$ are symmetric when $\rho\in \bbR$, one has that $\mathcal{E}^\top =\mathcal{M}\mathcal{L}$, and \eqref{eq.dualEigen} becomes $\mathcal{M}\mathcal{L}v=zv$. Applying $\mathcal L$ to both sides, we find that for any generalized eigenfunction $v$ of $\mathcal E^\top$, $u=\mathcal L v$ is a generalized eigenfunction for $\mathcal E$. By definition of $\mathcal L$ we have that
\begin{equation*}
    \begin{bmatrix}u_{2n+1}\\u_{2n}\end{bmatrix}=J\Theta_{2n}J\begin{bmatrix}v_{2n+1}\\v_{2n}\end{bmatrix}=\begin{bmatrix}-\alpha_{2n}&\rho_{2n}\\\rho_{2n}&\overline{\alpha_{2n}}\end{bmatrix}\begin{bmatrix}v_{2n+1}\\v_{2n}\end{bmatrix}.
\end{equation*}
It thus follows from \eqref{eq.solIter1} that solutions $v\in\ell^\infty(\bbZ)$ to \eqref{eq.dualEigen} satisfy the following recurrence relation:
\begin{equation}\label{eq.transposeIter}
\begin{bmatrix}
v_{2n+1}\\v_{2n}
\end{bmatrix}=P_{2n}^{-1}A_{n,z}P_{2n-2}\begin{bmatrix}v_{2n-1}\\v_{2n-2}\end{bmatrix},
\end{equation}
where $P_{n}=J\Theta(\alpha_n,\rho_n)J$ with $J$ as in \eqref{eq.Rn}. This again allows us to easily deduce the Szeg\H{o} transfer matrices for $\mathcal E^\top$ via \eqref{eq.ConjugatedTransferMatrix}.
\medskip

As detailed in \cite[Section 2.3]{CFO23}, the UAMO defined in \eqref{eq:UAMO_def} is a GECMV with dynamically defined Verblunksy coefficients
\begin{equation}\label{eq.uamo}
\begin{aligned}
	\alpha_{2n-1} &= \lambda_2 \sin(2\pi(\theta+n\Phi)),  \qquad\qquad& \alpha_{2n} &= \lambda_1',\\  
    \rho_{2n-1} &=  \lambda_2 \cos(2\pi(\theta+n\Phi)) - i \lambda_2', 	 & \rho_{2n} &= \lambda_1,
    \end{aligned}
\end{equation}
where, as above, $\lambda_i\in[0,1]$ and $\lambda_i'=\sqrt{1-\lambda_i^2}$ for $i=1,2$. Applying the gauge transformation from \cite{CFO23,CFLOZ23} yields an extended CMV matrix with Verblunsky coefficients as above except
\begin{equation}\label{eq.uamo_rho_gauged}
    \rho_{2n-1} =  \left(1-\lambda_2^2 \sin^2(2\pi(\theta+n\Phi))\right)^{1/2}.
\end{equation}

Plugging in these Verblunsky coefficients, $R_{2n}$ defined in \eqref{eq.Rn} is constant. Moreover, $A_{n,z}$ in \eqref{eq:GECMV_transmat} defines a quasiperiodic cocycle $(\Phi,A_z(\cdot))$ which is is $JR$-conjugate to the quasiperiodic two-step combined Szeg\H{o}-cocycle $(\Phi,S_z^{+}(\cdot))$, with $J$ and $R$ given in \eqref{eq.Rn} and
\begin{equation}\label{eq.Sz}
S_z^{+}(\theta)=\frac{1}{\lambda_1\sqrt{1-\lambda_2^2 \sin^2(2\pi \theta)}}\begin{bmatrix}z+\lambda_1'\lambda_2\sin(2\pi \theta)&-\lambda_1'z^{-1}-\lambda_2\sin(2\pi \theta)\\-\lambda_1'z-\lambda_2\sin(2\pi \theta)&z^{-1}+\lambda_1'\lambda_2\sin(2\pi \theta)\end{bmatrix}\in\SU(1,1).
\end{equation}
Similarly, we get the reduced Szeg\H{o} cocycle for the Aubry-dual operator $W^{\aubrydual}_{\lambda_1,\lambda_2,\Phi,\theta}$ from \eqref{eq.transposeIter} and \eqref{eq.ConjugatedTransferMatrix} as
\begin{equation}\label{eq.reverCocy}
S^{\aubrydual}_{z}(\theta)=\frac{1}{\lambda_2\sqrt{1-\lambda_1^2\sin^2(2\pi \theta)}}\begin{bmatrix}z+\lambda_1\lambda_2'\sin(2\pi\theta)&-\lambda_1\sin(2\pi \theta)-\lambda_2'z^{-1}\\-\lambda_1\sin(2\pi \theta)-\lambda_2'z&z^{-1}+\lambda_1\lambda_2'\sin(2\pi \theta)\end{bmatrix}.
\end{equation}
Note that this is just \eqref{eq.Sz} with $\lambda_1$ and $\lambda_2$ exchanged. In the rest of the paper, we will use simply $(\Phi,A_z)$,  $(\Phi,S^+_z)$ and $(\Phi,S^\aubrydual_z)$ to denote these cocycles when they do not cause confusion.

\subsection{Cocycle Dynamics}\label{sec:cocycle_dynamics}

A crucial ingredient to our proof is the behaviour of the quasi-periodic cocycles associated with the UAMO. Let $\Phi\in\bbR\setminus\bbQ$ and $A\in C(\bbT,\SU(1,1))$. This pair $(\Phi,A)$ defines a linear skew-product $(\theta,v)\mapsto(\theta+\Phi,A(\theta)v)$ and is called a $\SU(1,1)$-valued {\it cocycle}. Its iterates are defined as $(n\Phi,A^n)$ where
\begin{equation*}
A^n(\theta)=\begin{cases}
A((n-1)\Phi+\theta)\cdot A((n-2)\Phi+\theta)\cdots A(\theta), & n> 0,\\
A^{-1}(n\Phi+\theta)\cdot A^{-1}((n-1)\Phi+\theta)\cdots A^{-1}(\Phi+\theta), & n<0.
\end{cases}
\end{equation*}
We denote $A^0=\idty$, the identity matrix by usual convention. The motivation for considering $\SU(1,1)$-cocycles comes from the fact that we are interested in cocycles induced by the Szeg\H{o} transfer matrices from \eqref{eq:szego_normalized}. We note that concepts from the theory of $\SL(2,\bbR)$-cocycles carry over directly due to the isomorphism
\begin{equation}\label{eq:M}
    M^{-1}\SU(1,1)M=\SL(2,\bbR),
\end{equation}
where $M$ is the constant unitary matrix
\begin{equation*}
    M=\frac{1}{1+i}\begin{bmatrix}1&-i\\1&i\end{bmatrix}.
\end{equation*}

We adopt the following notions of Avila \cite{Avila1}: we say that $(\Phi,A)$ is \emph{uniformly hyperbolic} if there are $c,C>0$ such that $\|A^n(\theta)\|\geq Ce^{c|n|}$ for all $n\in\bbZ$. Moreover, if $A:X\to \mathrm{SU}(1,1)$ is analytic with an analytic extension to a strip $\{\theta+i\epsilon:|\epsilon|<\delta\}$,
\begin{definition}\label{def.global} If $(\Phi,A)$ is not uniformly hyperbolic, it is said to be
\begin{enumerate}[itemsep=1ex]
    \item \emph{Supercritical}, if $\sup_{\theta\in\bbT}\Vert A^n(\theta)\Vert$ grows exponentially.
    \item \emph{Subcritical}, if there exists a uniform sub-exponential upper bound on the growth of $\Vert A^n(\xi)\Vert$ through some band $|\Im \xi|<\tilde\epsilon$.
    \item \emph{Critical} otherwise.
\end{enumerate}
\end{definition}

This cocycle characterization can be employed to localize the spectrum of quasiperiodic CMV matrices \cite{DFLY2016DCDS}:
\begin{theorem}
    Let $\mathcal E_\theta$, $\theta\in\bbT$, be a quasiperiodic CMV matrix. Then, $z\in\bbC$ is in the spectrum of $\mathcal{E}_\theta$ if and only if the associated Szeg\H{o} cocycle $(\Phi,S_z)$ is not uniformly hyperbolic.
\end{theorem}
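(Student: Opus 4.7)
The plan is to follow the now-standard route, well known in the Schr\"odinger case since Johnson's original work and adapted to the CMV/OPUC setting in \cite{DFLY2016DCDS}, identifying the spectrum with the obstructions to an exponential dichotomy of the transfer cocycle. I would prove the two directions separately, using the conjugation \eqref{eq.ConjugatedTransferMatrix} to translate statements about Szeg\H o transfer matrices into statements about generalized solutions of $\mathcal E_\theta u=zu$.

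\emph{Direction ``uniformly hyperbolic $\Rightarrow z\notin\sigma(\mathcal E_\theta)$''.} Uniform hyperbolicity of $(\Phi,S_z)$ provides continuous invariant line fields $E^{s},E^{u}\colon\bbT\to\mathbb{P}(\bbC^{2})$ that contract exponentially under forward, respectively backward, iteration of the cocycle and that are transverse at every $\theta$. Via \eqref{eq.ConjugatedTransferMatrix}, these subbundles give rise to generalized solutions $u^{\pm}$ of $(\mathcal E_\theta-z)u=0$ that decay exponentially at $\pm\infty$. Transversality of $E^{s}$ and $E^{u}$ yields a nonvanishing Wronskian for $u^{\pm}$, and the usual Green's function formula (adapted to the five-diagonal CMV block structure) produces a kernel that decays exponentially off the diagonal with constants governed by the uniform hyperbolicity bounds. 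Hence $(\mathcal E_\theta-z)^{-1}$ is bounded and $z\in\rho(\mathcal E_\theta)$.

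\emph{Direction ``$z\notin\sigma(\mathcal E_\theta)\Rightarrow$ uniformly hyperbolic''.} Since the spectrum is $\theta$-independent by minimality of $\theta\mapsto\theta+\Phi$, one has $z\in\rho(\mathcal E_\theta)$ for every $\theta\in\bbT$. A Combes--Thomas estimate for the band-limited unitary $\mathcal E_\theta$ then produces a Green's function with exponential off-diagonal decay,
\begin{equation*}
    |\langle\delta_m,(\mathcal E_\theta-z)^{-1}\delta_n\rangle|\le C(\theta)\,\eop^{-c(\theta)|m-n|},
\end{equation*}
and by continuity of $\theta\mapsto(\mathcal E_\theta-z)^{-1}$ together with compactness of $\bbT$ the constants can be chosen uniform in $\theta$. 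The columns and rows of the resolvent then furnish exponentially decaying generalized solutions of the eigenvalue equation at $+\infty$ and $-\infty$ respectively; their initial data, transported back through \eqref{eq.ConjugatedTransferMatrix}, yield two transverse continuous line bundles invariant under $(\Phi,S_z)$ with uniform exponential contraction rates, which is precisely uniform hyperbolicity.

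\emph{Main obstacle.} The genuinely delicate step is the uniformity in $\theta$ of the Combes--Thomas decay in the converse direction, which is what really uses the minimality of the base rotation and the joint continuity of $\theta\mapsto\mathcal E_\theta$ in operator norm. A secondary but technical point is the bookkeeping needed to pass between the single-step Szeg\H o matrices $S_{n,z}$, the full-step eigenfunction transfer matrices $A_{n,z}$, and the parity-two block decomposition $\mathcal E=\mathcal L\mathcal M$; because uniform hyperbolicity is preserved under finite products and conjugation by bounded matrices, none of these conversions are problematic in principle, but one must verify that the stable/unstable bundles constructed from the resolvent are consistent across the two parity classes modulo~$2$ so that they genuinely define continuous subbundles over $\bbT$ rather than only over a double cover.
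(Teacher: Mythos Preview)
Your outline is a faithful sketch of the standard Johnson-type argument as carried out for CMV matrices in \cite{DFLY2016DCDS}, and there is no mathematical gap in what you propose. However, the paper does not prove this theorem at all: it simply quotes the result from \cite{DFLY2016DCDS} and moves on, so there is no ``paper's own proof'' to compare against. Your write-up is thus strictly more detailed than what the paper provides; if anything, it could be shortened to a one-line citation matching the paper's treatment.
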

In fact, this result holds in a more general setting where instead of circle shifts one merely has a dynamical system $(X,T)$ that is minimal. It can be shown in this case that there exists a common set $\Sigma$ as the spectrum of $\mathcal{E}_\theta$ for every $\theta\in\bbT.$

We call two cocycles $(\Phi,A)$ and $(\Phi,B)$ {\it analytically conjugated} if there is an analytic mapping $Z:\bbT\to\mathrm{PSU}(1,1)$ such that 
\begin{equation*}
[Z(\Phi+\theta)]^{-1}A(\theta)Z(\theta)=B(\theta).
\end{equation*}
We say that $(\Phi,A)$ is \emph{(analytically) reducible} if it is analytically conjugated to a cocycle $(\Phi,B)$ with $B$ constant.
\begin{definition}
We call $(\Phi,A)$ \emph{almost reducible} if the closure of its analytic conjugacy class contains a constant, that is, if there exists $\epsilon>0$ and analytic $Z_n:\bbT\to\mathrm{PSU}(1,1)$ with holomorphic extensions to $\{\theta+iy:|y|<\epsilon\}$ such that
\begin{equation*}
    \lim_{n\to\infty}\|Z_n(\cdot+\Phi)A(\cdot)Z_n(\cdot)^{-1}-B\|_\epsilon=0,
\end{equation*}
where $B\in\mathrm{SU}(1,1)$ is constant and $\|A\|_\epsilon=\sup_{|\Im(z)|<\epsilon}\|A(z)\|$.
\end{definition}

The following result of Avila is crucial for our proof: 
\begin{theorem}[Avila \cite{Avila24}]\label{thm:ARC}
Let $\Phi\in \bbR\setminus\bbQ$. If $(\Phi,A)$ is subcritical, then it is almost reducible.
\end{theorem}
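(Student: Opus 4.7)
The plan is to invoke Avila's global theory of one-frequency analytic $\SU(1,1)$-cocycles. The key structural input is that the Lyapunov exponent $L(\varepsilon)$ of the complexified cocycle $(\Phi, A(\cdot + i\varepsilon))$ is convex in $\varepsilon$, and its right-derivative -- the \emph{acceleration} $\omega(\varepsilon)$ -- is quantized: for $\SU(1,1)$-valued cocycles it takes values in a discrete set. Subcriticality forces $L \equiv 0$ on a strip $|\varepsilon| < \tilde\varepsilon$, and by convexity plus quantization one then gets $\omega \equiv 0$ there. This vanishing of the acceleration in a strip is the structural feature I would exploit as the defining signature of subcriticality.

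Next, I would mimic the Avila--Krikorian renormalization strategy: rescale the base dynamics along continued-fraction approximants of $\Phi$ to produce a sequence of renormalized cocycles. Zero acceleration is preserved under this renormalization, and the goal is to show that the renormalized cocycles eventually lie close to a constant in $\SU(1,1)$ on a fixed strip. This is a compactness-type statement that leverages the rigidity encoded by $L \equiv 0$ together with the algebraic constraint $\omega \equiv 0$, ruling out the nontrivial limiting behaviors seen in the supercritical and critical regimes.

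Once a renormalized cocycle is close enough to constant, I would invoke a perturbative KAM-type reducibility theorem: close-to-constant cocycles with zero acceleration can be analytically conjugated arbitrarily close to an $\SU(1,1)$-constant, after removing finitely many resonances. Reassembling the conjugacies produced at successive renormalization scales then yields the sequence $Z_n$ required by the definition of almost reducibility, with uniform control on a strip $|\Im \xi| < \varepsilon$.

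The hard part will be carrying out this program for \emph{all} irrational $\Phi$, including Liouvillean ones. Standard KAM iterations rely on Diophantine control of small divisors, and here no such assumption is available -- otherwise one would obtain genuine reducibility via Eliasson rather than only almost reducibility. The delicate innovation of \cite{Avila24} is to handle resonances at the level of the fibered rotation number and to exploit the fact that subcriticality, being a property of $L$ on an open strip, propagates robustly under renormalization, so that the compactness gained from $\omega \equiv 0$ replaces the arithmetic control that is missing. I expect this to be the step where my own attempt would stall, and where Avila's argument has to be followed closely rather than reinvented.
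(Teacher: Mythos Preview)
The paper does not prove Theorem~\ref{thm:ARC}; it is stated with attribution to Avila \cite{Avila24} and used as a black box throughout (in particular in the proofs of Theorem~\ref{thm:gapEdgeRd} and Theorem~\ref{thm.Main1}). There is therefore no ``paper's own proof'' to compare against: the statement is an external input, and reproducing Avila's argument is entirely outside the scope of this work.

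As to the content of your sketch: it is a reasonable high-level summary of the ingredients in Avila's proof of the Almost Reducibility Conjecture---quantized acceleration, the characterization of subcriticality as $L\equiv 0$ and $\omega\equiv 0$ on a strip, continued-fraction renormalization, and a perturbative step near constants---but it is a narrative outline rather than a proof, and you yourself flag that the Liouvillean case is where the outline stops being self-contained. That is accurate: the passage from ``renormalized cocycles are eventually close to constant'' to an actual almost-reducibility conclusion for all irrational $\Phi$ is precisely the deep content of \cite{Avila24}, and nothing in your sketch substitutes for it. For the purposes of this paper, the correct move is simply to cite the result, which is what the authors do.
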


Let $(\Phi,A)$ be an $\SU(1,1)$-valued quasiperiodic cocycle that is homotopic to a constant, and identify $\bbS^1\subset\bbR^2\equiv\bbC$ in the usual way. The cocycle $(\Phi,A)$ induces a homeomorphism $F_A:\bbT\times \bbS^1\to \bbT\times \bbS^1$ by $F_A(\theta,v)=(\Phi+\theta,f_A(\theta,v))$ where
\begin{equation*}
    f_A(\theta,v):=\frac{A(\theta)v}{\|A(\theta)v\|}.
\end{equation*}
This map admits a continuous lift $\widetilde F_A:\bbT\times \bbR\to \bbT\times \bbR$, $(\theta,t)\mapsto(\Phi+\theta,\widetilde f_A(\theta,t))$, where $\widetilde f_A:\bbT\times\bbR\to\bbR$ is a lift of $\widetilde f_A$ satisfying $\widetilde f_A(\theta,t+1)=f(\theta,t)+1$ and, if $\pi_2:\bbT\times\bbR\to\bbT\times\bbS^1$ denotes the projection $(\theta,\phi)\mapsto(\theta,e^{2\pi i\phi})$, $\pi_2\circ\widetilde F_A=F_A\circ\pi_2$.

\begin{definition}[Rotation number \cite{Her83}]
Let $(\Phi,A)$ be a $\SU(1,1)$-valued cocycle that is homotopic to a constant. Then, the limit
\begin{equation*}
    \rot(\Phi,A)=\lim_{n\to\infty}\frac{\widetilde f_A^n(\theta,t)-t}{n}
\end{equation*}
exists uniformly and is independent of $(\theta,t)$. It is called the \emph{fibered rotation number} of the cocycle $(\Phi,A)$.
\end{definition}

The regularity of rotation numbers in great generality without referring to spectral theory can be found in \cite{GordetskiKleptsyn2025}. 

We also need the following discrete one-dimensional analog of a result by Eliasson from \cite{eliassonFloquetSolutions1dimensional1992}, compare also Hadj-Amor \cite[Theorem 1]{HA09} resp. \cite[Theorem 11]{Puig_2005}:
\begin{theorem}\label{Eli92}
Let $\delta > 0$, $\Phi \in \DC(\kappa,\tau)$, and $A_{*}\in \mathrm{SU}(1,1)$. Then there is a constant $\epsilon = \epsilon(\gamma,\tau,\delta,\| A_{0}\|)$ such that if $A \in C^{\omega}_{\delta}(\bbT,\mathrm{SU}(1,1))$ is analytic with
$\| A - A_{*} \|_{\delta} \leq \epsilon$ and the rotation number $\rot(\Phi, A)$ is either Diophantine or rational with respect to $\Phi/2$, then $(\Phi,A)$ is reducible.
\end{theorem}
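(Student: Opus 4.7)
My plan is to establish this reducibility result through a KAM-type iteration scheme that progressively conjugates the cocycle towards a constant. Writing $A(\theta) = A_{*} \exp F_0(\theta)$ with $F_0 \colon \bbT \to \mathfrak{su}(1,1)$ analytic on the strip of width $\delta$ and of norm comparable to $\epsilon$, I construct inductively constants $A_j \in \SU(1,1)$, perturbations $F_j \in C^\omega_{\delta_j}(\bbT, \mathfrak{su}(1,1))$ and analytic conjugations $Z_j \colon \bbT \to \mathrm{PSU}(1,1)$ satisfying
\[
Z_j(\theta + \Phi)^{-1}\bigl(A_j \exp F_j(\theta)\bigr) Z_j(\theta) = A_{j+1} \exp F_{j+1}(\theta),
\]
with $\|F_{j+1}\|_{\delta_{j+1}}$ quadratically smaller than $\|F_j\|_{\delta_j}$ on a slightly narrower strip $\delta_{j+1} < \delta_j$. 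Provided the iteration converges, the infinite composition of the $Z_j$'s yields the analytic conjugation reducing $(\Phi, A)$ to the limiting constant.

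The heart of each KAM step is the homological equation
\[
Y(\theta + \Phi) - \mathrm{Ad}(A_j^{-1}) Y(\theta) = F_j(\theta) - \widehat{F_j}(0),
\]
which I solve Fourier mode by mode: the eigenvalues of $\mathrm{Ad}(A_j^{-1})$ on the complexified Lie algebra are $1$ and $e^{\pm 4\pi i \rho_j}$, where $\rho_j$ is a rotation-number-like quantity attached to $A_j$ that converges to $\rot(\Phi, A)$ as $j \to \infty$. The resulting small divisors are $\|n\Phi\|_\bbT$ and $\|n\Phi \pm 2\rho_j\|_\bbT$, the former being uniformly controlled by $\Phi \in \DC(\kappa,\tau)$.

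Here the dichotomy on $\rot(\Phi, A)$ enters. If $\rot(\Phi, A)$ is Diophantine with respect to $\Phi/2$, then $\|n\Phi - 2 \rot(\Phi, A)\|_\bbT \gtrsim |n|^{-\tau'}$ uniformly; by continuity of the fibered rotation number under the small perturbations made in the scheme, the quantities $\|n\Phi \pm 2\rho_j\|_\bbT$ inherit a comparable bound for all but finitely many $n$, and these finitely many exceptional modes are absorbed by choosing $\epsilon$ small enough at the outset. The KAM scheme then converges to analytic reducibility to a constant. If instead $\rot(\Phi, A)$ is rational with respect to $\Phi/2$, there is essentially one resonant mode $n_0$; a preliminary finite-dimensional conjugation of the form $\theta \mapsto \exp(\pi i n_0 \theta \, \sigma)$ for a suitable $\sigma \in \mathfrak{su}(1,1)$ shifts the rotation number to zero and removes the resonance, after which the non-resonant KAM scheme applies.

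The main technical obstacle is the quantitative bookkeeping of the iteration: one must choose the strip widths $\delta_j \searrow \delta_\infty > 0$ and the Fourier truncation thresholds $N_j$ so that the Cauchy losses at each step are dominated by the quadratic improvement $\|F_{j+1}\|_{\delta_{j+1}} \lesssim \|F_j\|_{\delta_j}^2$, and so that $\prod_j Z_j$ converges in $C^\omega_{\delta_\infty}(\bbT, \mathrm{PSU}(1,1))$. In the resonant case one must additionally verify that the finite-dimensional correction does not exhaust the convergence budget of subsequent KAM steps, which is the delicate content of the classical treatments in \cite{eliassonFloquetSolutions1dimensional1992,HA09,Puig_2005}. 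Since our cocycles lie in $\SU(1,1)$ rather than $\SL(2,\bbR)$, the arguments are translated through the isomorphism $M^{-1}\SU(1,1)M = \SL(2,\bbR)$ of \eqref{eq:M}, which preserves both the Lie-algebraic structure and the Diophantine conditions and is therefore inessential for the analysis.
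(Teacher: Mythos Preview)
The paper does not prove this theorem; it is quoted as a known result, with references to Eliasson \cite{eliassonFloquetSolutions1dimensional1992}, Hadj-Amor \cite{HA09}, and Puig \cite{Puig_2005} for the proof. Your KAM sketch is faithful to the approach in those references---the homological equation in the Lie algebra, the small-divisor dichotomy tied to the Diophantine-versus-rational hypothesis on the rotation number, and the preliminary rotation that removes the single resonance in the rational case---so there is nothing substantive to compare against the paper itself.

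One small caveat on your sketch: the intermediate rotation parameters $\rho_j$ are not known a priori to satisfy the same Diophantine bound as the limiting $\rot(\Phi,A)$, so the claim that the finitely many exceptional modes can be ``absorbed by choosing $\epsilon$ small enough at the outset'' hides the usual Eliasson bookkeeping in which one tracks, at each step, a truncation scale $N_j$ chosen so that the Diophantine condition on the \emph{limit} controls the divisors $\|n\Phi\pm2\rho_j\|_\bbT$ for $|n|\le N_j$ while the tail $|n|>N_j$ is handled by analyticity. This is precisely the ``delicate content'' you already acknowledge deferring to the cited works, so it is not a gap in your proposal so much as a point where the sketch should not be read as self-contained.
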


The cocycle corresponding to the generalized eigenvalue equation \eqref{eq.dualEigen} depends on the spectral parameter $z\in\bbC$, compare \eqref{eq:GECMV_transmat}, therefore we write $\rot\equiv\rot(z)$ to make this dependence explicit.
Indeed, \cite{eliassonFloquetSolutions1dimensional1992} and \cite{moserExtensionResultDinaburg1984}, or more recently \cite{Puig2004CMP}, give a detailed characterization of the location of the spectral parameter based on the knowledge of reducibility for Schr\"odinger cocycles. 

The following result is the form we need; we shall provide a proof along the lines of \cite{Puig_2005} for completeness.
\begin{theorem}\label{thm:gapEdgeRd}
Let $\delta>0$, $\Phi\in \DC(\kappa,\tau)$, $z\in\bbC$, and let $A_z\in \mathrm{SU}(1,1)$ be a constant matrix and let $A_z(\theta)$ be given by \eqref{eq:GECMV_transmat} and \eqref{eq.uamo}. There exists a constant $\epsilon=\epsilon (\kappa,\tau,\delta,\Vert A_z\Vert)$ such that if $A_z(\theta)\in C^\omega_\delta(\bbT,\mathrm{SU}(1,1))$ with $\| A_z(\theta)-A_z\|_\delta<\epsilon$, and $z\in\bbC$ locates at an edge of a spectral gap, then there exists $Z\in C^\omega_\delta(\bbT,\mathrm{PSU}(1,1))$ such that 
$$M^{-1}Z(\theta+\Phi)^{-1}A_z(\theta)Z(\theta)M=\begin{bmatrix}1&c\\0&1\end{bmatrix},$$
where $M$ is the matrix in \eqref{eq:M} that induces the isomorphism between $\mathrm{SU}(1,1)$ and $\mathrm{SL}(2,\bbR)$. In particular, if $\{z\}=\rot^{-1}(k\Phi/2)$ for some $k\in\bbZ$, then $c=0.$
\end{theorem}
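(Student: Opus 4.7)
The plan is to apply Eliasson's reducibility theorem (Theorem~\ref{Eli92}) to reduce $(\Phi,A_z)$ to a constant, then use the fact that $z$ is a gap edge to identify the constant as a parabolic element. First, since $z$ sits at the edge of a spectral gap, the Gap Labeling Theorem (see \cite{bellissardGapLabellingTheorems1992,johnsonRotationNumberAlmost1982a}) forces the fibered rotation number to satisfy $\rot(\Phi,A_z)\equiv k\Phi/2\pmod{\bbZ}$ for some $k\in\bbZ$; in particular it is rational with respect to $\Phi/2$. Since $\Phi\in\DC(\kappa,\tau)$ and $\|A_z(\theta)-A_z\|_\delta<\epsilon$ with $\epsilon$ small, Theorem~\ref{Eli92} applies and yields $Z_0\in C^\omega_\delta(\bbT,\mathrm{PSU}(1,1))$ and a constant $B\in\mathrm{SU}(1,1)$ with $Z_0(\theta+\Phi)^{-1}A_z(\theta)Z_0(\theta)=B$.

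Next, I would classify $B$. Because $z\in\Sigma_{\lambda_1,\lambda_2,\Phi}$, the cocycle $(\Phi,A_z)$ is not uniformly hyperbolic, so $B$ cannot be hyperbolic. Composing $Z_0$ with a suitable character of the form $\theta\mapsto M\,\mathrm{diag}(e^{i\pi\ell\theta},e^{-i\pi\ell\theta})M^{-1}$ shifts $\rot(\Phi,B)$ by $\ell\Phi/2$, and thus allows one to arrange $\rot(\Phi,B)\equiv 0\pmod{\bbZ}$. Combined with the exclusion of the hyperbolic case, this forces $M^{-1}BM$ to be conjugate, within $\mathrm{SL}(2,\bbR)$, either to the identity or to a parabolic element $\left[\begin{smallmatrix}1 & c\\ 0 & 1\end{smallmatrix}\right]$ with $c\neq 0$. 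Absorbing the $\mathrm{SL}(2,\bbR)$ conjugation into the final $Z$, we obtain the stated upper-triangular normal form.

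For the last assertion, suppose $\{z\}=\rot^{-1}(k\Phi/2)$ but $c\neq 0$. Perturbing the spectral parameter to $z'=z+t$ for small $t$ produces a nearby analytic cocycle $(\Phi,A_{z'})$ whose reduction via the same $Z$ deforms $B$ by a term proportional to $t\cdot\partial_z A_z$ projected onto the reduced coordinates. A direct trace computation (in the spirit of Puig \cite{Puig2004CMP,Puig_2005}) shows that a non-degenerate parabolic $B$ with $c\neq0$ becomes hyperbolic for $t$ on one side of $0$ and elliptic for $t$ on the other; on the elliptic side the rotation number remains locked at $k\Phi/2$ by continuity and constancy on gaps. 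Hence $\rot^{-1}(k\Phi/2)$ contains a non-trivial interval abutting $z$, contradicting $\{z\}=\rot^{-1}(k\Phi/2)$. Therefore $c=0$.

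The main obstacle I anticipate is bookkeeping the rotation number under conjugations: one has to verify that the ``character trick'' shifts $\rot$ exactly by $\ell\Phi/2$ (rather than $\ell\Phi$), which reflects the two-step nature of the CMV transfer matrix \eqref{eq:GECMV_transmat} that combines even and odd sites, and is precisely why the Gap Labeling Theorem produces labels in $\frac{1}{2}\Phi\bbZ+\bbZ$. A secondary subtlety is confirming that the perturbation $\partial_z A_z$ visible from \eqref{eq:GECMV_transmat} indeed has a non-zero component off the diagonal in the reduced basis, which guarantees the dichotomy hyperbolic/elliptic described above.
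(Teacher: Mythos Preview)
Your overall strategy matches the paper's: invoke Eliasson reducibility (Theorem~\ref{Eli92}) at a gap edge, rule out the hyperbolic case since $z\in\Sigma$, normalize the constant to $\left[\begin{smallmatrix}1&c\\0&1\end{smallmatrix}\right]$, and then run a Puig-type perturbation argument in the spectral parameter to derive a contradiction when $c\neq 0$ and $\{z\}=\rot^{-1}(k\Phi/2)$. The paper does exactly this, with the perturbation taken multiplicatively along the unit circle, $z\mapsto z\,e^{2i\zeta}$, and an explicit trace computation yielding $2+c\zeta\,[\,|z_1+z_2|^2\,]$ in first order, which exceeds $2$ for one sign of $\zeta$ and forces uniform hyperbolicity there.

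There is, however, a genuine slip in your final paragraph: you write that ``on the elliptic side the rotation number remains locked at $k\Phi/2$ by continuity and constancy on gaps.'' This is backwards. The rotation number is constant precisely on spectral \emph{gaps}, i.e., on the uniformly hyperbolic side; on the elliptic side one is inside the spectrum and the rotation number generically moves. The contradiction comes from the \emph{hyperbolic} side: uniform hyperbolicity for small $\zeta$ of a fixed sign means those $z'$ lie in a gap whose label is still $k\Phi/2$, so $\rot^{-1}(k\Phi/2)$ contains a nondegenerate arc, contradicting the singleton hypothesis. Once you swap ``elliptic'' for ``hyperbolic'' your argument is correct and coincides with the paper's. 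Two minor points: since the spectral parameter lives on $\partial\bbD$, perturb along the circle ($z\mapsto z\,e^{i\zeta}$) rather than additively $z+t$; and the ``non-zero off-diagonal component'' you worry about is exactly what the paper verifies via the explicit formula, where $[\,|z_1+z_2|^2\,]>0$ follows from $|z_1|^2-|z_2|^2=1$.
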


\begin{proof}
It suffices to consider the Szeg\H{o} cocycles due to \eqref{eq.ConjugatedTransferMatrix}. For the case $c\neq0$ we shall assume that $c>0$, the case $c<0$ is similar. One may refer to \cite[Theorem 2]{BPS03} for the case $c=0$. Suppose that $\{z\}\in\rot^{-1}(k\Phi/2)$. Then by Theorem \ref{thm:ARC} and Theorem \ref{Eli92}, since we are not in the uniformly hyperbolic case, there exists $Z\in C^\omega_\delta(\bbT,\mathrm{PSU}(1,1))$ such that 
\begin{equation}\label{eq:conjEq1}
    M^{-1}Z(\theta+\Phi)^{-1}\frac{1}{\rho}\begin{bmatrix}z^{\frac{1}{2}}&-\overline{\alpha}z^{-\frac{1}{2}}\\-\alpha z^{\frac{1}{2}}&z^{-\frac{1}{2}}\end{bmatrix}Z(\theta)M=\begin{bmatrix}1&c\\0&1\end{bmatrix}.
\end{equation}
Consider the variation of the spectral parameter $z\mapsto z\,e^{2i\zeta}$ where we put a factor $2$ for convenience. Then, computing the right side of \eqref{eq:conjEq1} yields
\begin{equation}\label{eq:conjEqNew}
\begin{bmatrix}
    1&c\\0&1
\end{bmatrix}+\zeta\begin{bmatrix}
    1&c\\0&1
\end{bmatrix}\begin{bmatrix}-\Im\left((\overline{z_1}-\overline{z_2})(z_1+z_2)\right)&|z_1-z_2|^2\\|z_1+z_2|^2&\Im\left((\overline{z_1}-\overline{z_2})(z_1+z_2)\right)\end{bmatrix}+O(\zeta^2),
\end{equation}
where $Z=\begin{bmatrix}z_1&z_2\\\overline{z_2}&\overline{z_1}\end{bmatrix}$ with $|z_1|^2-|z_2|^2=1.$ Taking the trace and averaging yield to first order 
\begin{equation}\label{eq:trace}
2+c\zeta\left[|z_1+z_2|^2\right]
\end{equation}
where $[\cdot]$ means averaging with respect to the Lebesgue measure on $\bbT.$ It follows that if $c\neq 0$, we can pick $\zeta$ sufficiently small such that $c\zeta>0$ and $\eqref{eq:trace}>2$.

Let us show that the system represented by \eqref{eq:conjEqNew} is uniformly hyperbolic for $c\neq 0$ and $c\zeta>0$ with $\zeta$ sufficiently small.
Let $$B_0=\begin{bmatrix}0&c\\0&0\end{bmatrix}, B_1=\begin{bmatrix}-\Im  \overline{(z_1-z_2)}(z_1+z_2)+\frac{c}{2}|z_1+z_2|^2&|z_1-z_2|^2+c \Im \overline{(z_1-z_2)}(z_1+z_2)\\|z_1+z_2|^2& \Im \overline{(z_1-z_2)}(z_1+z_2)- \frac{c}{2}|z_1+z_2|^2\end{bmatrix}.$$
One can verify that $$\eqref{eq:conjEqNew}=\exp(B_0+\zeta B_1+O(\zeta^2)).$$
Denote
\begin{equation*}
    D(\zeta)=[B_0]+\delta [B_1]=\begin{bmatrix}d_1&d_2\\d_3&-d_1\end{bmatrix}
\end{equation*}
and let 
\begin{equation*}
    d(\zeta)=\det(D(\zeta))=-c\zeta\left[|z_1+z_2|^2\right]+O(\zeta^2).
\end{equation*} 
Then for $\zeta$ sufficiently small satisfying $c\zeta>0$, we have $d(\zeta)<0$.
There exists $Q\in \mathrm{SU}(1,1)$ with $\Vert Q\Vert^2=O(\Vert D\Vert/\sqrt{|\zeta|})$ (compare, e.g., \cite[Lemma 4.1]{LDZ2022JFA}) such that 
$$Q^{-1}D(\zeta)Q=\begin{bmatrix}
    \sqrt{-d}&0\\0&-\sqrt{-d}
\end{bmatrix}.$$
Moreover, $$Q^{-1}\exp(B_0+\zeta B_1+O(\zeta^2))Q=\exp(\Delta+O(|\zeta|^{3/2})),$$
which is uniformly hyperbolic for $\zeta$ sufficiently small.
This shows that when $c\neq0$, $z$ is an edge of an open gap, which contradicts the assumption $\{z\}=\rot^{-1}(k\Phi/2)$. 
\end{proof}

\section{Proofs}
\subsection{Proof of Theorem \ref{thm.Main1}}
In this section, we give a short proof of Theorem \ref{thm.Main1}. In the non-critical case $\lambda_1\neq \lambda_2$, either $S^+_{z}$ or $S^{\aubrydual}_{z}$ is subcritical. Since the corresponding operators $W_{\lambda_1,\lambda_2,\Phi,\theta}$ and $W^{\aubrydual}_{\lambda_1,\lambda_2,\Phi,\theta}$ are isospectral by Aubry-André duality, we can start from either side. Since $\Phi\in \DC$ in Theorem \ref{thm.Main1}, subcriticality essentially implies reducibility \cite{Avila24}, and we can adopt Puig's argument \cite{Puig2004CMP}. The conservation of the Wronskian for the second order difference operator indicated by the transfer matrices rules out double point spectrum, i.e., point spectrum with geometric multiplicity two.
To be more specific, if $W^{\aubrydual}_{\lambda_1,\lambda_2,\Phi,\theta}$ is localized, then its eigenvectors decay to zero, leading to vanishing Wronskians, which means it cannot have two linearly independent eigenvectors corresponding to a single eigenvalue.
The idea of the proof in the following context is that the transfer matrix of $W_{\lambda_1,\lambda_2,\Phi,\theta}$ being reducible to the identity violates the simplicity of the point spectrum of the corresponding Aubry-dual operator, which leads to a contradiction.

We begin by characterizing the two-step Szeg\H{o}-cocycle $(\Phi,S^+_z)$ according to the nomenclature of Definition \ref{def.global}. By \cite[Theorem 2.9]{CFO23} and \eqref{eq.ConjugatedTransferMatrix}, we have the following:
\begin{theorem}\label{thm.subcritical}
Let $\Phi\in \bbR\setminus\bbQ$ and $\lambda_1>\lambda_2$. Then $(\Phi,S^+_{z})$ is subcritical for every $z\in\Sigma_{\lambda_1,\lambda_2,\Phi}$.
\end{theorem}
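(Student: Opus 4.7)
The plan is a short transfer-of-information argument that hinges on \cite[Theorem 2.9]{CFO23}, which is the content of the spectral phase diagram drawn in the introduction. That theorem computes the Lyapunov exponent of the UAMO eigenfunction cocycle $(\Phi,A_z)$ and classifies it, in the sense of Avila's global theory (Definition~\ref{def.global}), according to the region of the $(\lambda_1,\lambda_2)$-plane. In particular, for $\lambda_1>\lambda_2$ and $z\in \Sigma_{\lambda_1,\lambda_2,\Phi}$, the cocycle $(\Phi,A_z)$ is subcritical: there exists $\tilde\epsilon>0$ and a sub-exponential majorant $C_\epsilon(n)$, $\epsilon<\tilde\epsilon$, with $\|A_z^n(\theta+i\eta)\|\leq C_\epsilon(n)$ uniformly for $|\eta|<\epsilon$ and $\theta\in\bbT$.

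To transport this to the Szeg\H{o} cocycle, I invoke the identity \eqref{eq.ConjugatedTransferMatrix}, which reads $A_{n,z}=R_{2n}^{-1}JS_{n,z}^{+}JR_{2n-2}$. Under the UAMO rule \eqref{eq.uamo}, the even Verblunsky data are \emph{independent of $n$}: $\alpha_{2n}=\lambda_1'$ and $\rho_{2n}=\lambda_1$, so $R_{2n}\equiv R$ is a single constant matrix. The identity therefore collapses to $A_{n,z}=R^{-1}JS_{n,z}^{+}JR$, and since $J^2=\idty$, the intermediate factors in an $n$-fold product telescope, yielding
\begin{equation*}
A_z^n(\theta)=R^{-1}J\,(S_z^{+})^n(\theta)\,JR\qquad\text{for all } n\in\bbN,\ \theta\in\bbT.
\end{equation*}
Extending in $\theta$ to the strip $|\Im\theta|<\tilde\epsilon$ and noting that conjugation by the constant invertible matrices $R^{-1}J$ and $JR$ preserves norms up to a fixed multiplicative constant, the subexponential bound carries over from $A_z^n$ to $(S_z^+)^n$. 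Hence $(\Phi,S_z^{+})$ is subcritical in the sense of Definition~\ref{def.global}.

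The only delicate aspect of the plan is ensuring that \cite[Theorem 2.9]{CFO23} indeed provides the strip-wise estimate that subcriticality demands, rather than merely the vanishing of the Lyapunov exponent on the real torus. I expect this is verified there via the explicit Lyapunov exponent formula combined with Avila's acceleration theorem: since the acceleration is $0$ in a neighborhood of the real axis throughout the interior region $\lambda_1>\lambda_2$ of the phase diagram, $L(\epsilon,z)\equiv 0$ in a small strip. Granted that input, the telescoping-conjugation step above is purely formal and completes the proof.
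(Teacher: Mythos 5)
Your proposal is correct and takes essentially the same approach as the paper, which derives the theorem directly from \cite[Theorem~2.9]{CFO23} together with the conjugation identity \eqref{eq.ConjugatedTransferMatrix}. You have correctly noted the two facts that make the transfer immediate: the even-index Verblunsky data of the UAMO are $n$-independent, so $R_{2n}\equiv R$ is a single constant matrix, and conjugation of a cocycle by constant invertible matrices preserves the strip-wise subexponential norm bound demanded by Definition~\ref{def.global}, so subcriticality is inherited by $(\Phi,S_z^+)$.
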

As a direct corollary, we have that
\begin{coro}\label{cor.subcritical}
Let $\Phi\in \bbR\setminus\bbQ$ and $\lambda_1<\lambda_2$. Then $(\Phi,S^{\aubrydual}_z)$ is subcritical for every $z\in \Sigma_{\lambda_1,\lambda_2,\Phi}$.
\end{coro}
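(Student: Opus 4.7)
The plan is to obtain this corollary by a direct symmetry-swap argument from Theorem \ref{thm.subcritical}, exploiting two facts already established in the preliminaries: first, the explicit formula \eqref{eq.reverCocy} shows that $S^{\aubrydual}_z$ for the parameter pair $(\lambda_1,\lambda_2)$ is literally the cocycle $S^+_z$ of \eqref{eq.Sz} with $\lambda_1$ and $\lambda_2$ interchanged; second, Aubry-André duality (as recalled right after \eqref{eq:UAMO_def} and made dynamical in Theorem \ref{thm.reverDual}) identifies the spectra $\Sigma_{\lambda_1,\lambda_2,\Phi}=\Sigma_{\lambda_2,\lambda_1,\Phi}$.

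With this in mind, I would proceed as follows. Starting from hypotheses $\Phi\in\bbR\setminus\bbQ$ and $\lambda_1<\lambda_2$, I would set $\tilde\lambda_1=\lambda_2$ and $\tilde\lambda_2=\lambda_1$, so that $\tilde\lambda_1>\tilde\lambda_2$. Theorem \ref{thm.subcritical} applied to this swapped pair asserts that the two-step Szeg\H{o} cocycle $(\Phi,S^+_z)$ associated with $(\tilde\lambda_1,\tilde\lambda_2)$ is subcritical for every $z\in\Sigma_{\tilde\lambda_1,\tilde\lambda_2,\Phi}$. Comparing \eqref{eq.Sz} written in the variables $(\tilde\lambda_1,\tilde\lambda_2)$ to \eqref{eq.reverCocy} written in $(\lambda_1,\lambda_2)$ gives a term-by-term identification, hence the cocycles coincide as analytic $\mathrm{SU}(1,1)$-valued functions of $\theta$. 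The spectral identification $\Sigma_{\tilde\lambda_1,\tilde\lambda_2,\Phi}=\Sigma_{\lambda_1,\lambda_2,\Phi}$ then transports the conclusion to the range of $z$ we want.

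There is no real obstacle here; the content is bookkeeping. The only point worth a sentence of care is to note that subcriticality, being a property defined in Definition \ref{def.global} purely in terms of the growth of $\|A^n(\cdot)\|$ on some strip $|\Im\xi|<\tilde\epsilon$, depends only on the analytic cocycle and not on which operator the cocycle was derived from; so the formal equality of the two cocycles transfers the subcritical property verbatim. Combining these three observations — swap of parameters in the hypothesis of Theorem \ref{thm.subcritical}, formula \eqref{eq.reverCocy} versus \eqref{eq.Sz}, and the Aubry-André spectral identity — yields the claim.
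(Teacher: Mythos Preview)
Your proposal is correct and is precisely the reasoning the paper has in mind: the corollary is stated immediately after Theorem~\ref{thm.subcritical} with the phrase ``As a direct corollary,'' and your parameter-swap argument---identifying $S^{\aubrydual}_z$ for $(\lambda_1,\lambda_2)$ with $S^+_z$ for $(\lambda_2,\lambda_1)$ via \eqref{eq.Sz}--\eqref{eq.reverCocy}, together with the spectral identity $\Sigma_{\lambda_1,\lambda_2,\Phi}=\Sigma_{\lambda_2,\lambda_1,\Phi}$ from Aubry--Andr\'e duality---is exactly what that phrase encodes.
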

The following result is an analogue of Avron-Simon \cite{AS1983}:
\begin{theorem} For any $|z|=1$,
denote by $\rot_{\lambda_1,\lambda_2}(z)$ and $\rot^\aubrydual_{\lambda_1,\lambda_2}(z)$ the rotation numbers of $(\Phi,S^+_z)$ and $(\Phi,S^\aubrydual_z)$ respectively. Then
$$\rot_{\lambda_1,\lambda_2}(z)=\rot^\aubrydual_{\lambda_1,\lambda_2}(z).$$
\end{theorem}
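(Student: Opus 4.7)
The plan is to deduce the equality of rotation numbers from the equality of the integrated densities of states (IDS) of the unitary operators $W_{\lambda_1,\lambda_2,\Phi,\theta}$ and $W^{\aubrydual}_{\lambda_1,\lambda_2,\Phi,\theta}$, combined with the Johnson-Moser type relation for extended CMV matrices which expresses the fibered rotation number of the Szeg\H{o} cocycle at a spectral parameter $z=e^{i\omega}$ on $\bbS^1$ as an affine function of the IDS $N(\omega)$; see Geronimo-Johnson \cite{GeronimoJohnson1996JDE} and Simon \cite{Simon2005OPUC1,Simon2005OPUC2}. A minor technical point is that $S_z^+$ and $S_z^{\aubrydual}$ are two-step cocycles, so one picks up a uniform factor of $2$ in this relation, which cancels on both sides of the desired identity.

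The heart of the proof is therefore to establish that the two operators share the same IDS. The natural strategy is to approximate $\Phi$ by a sequence of rationals $p_n/q_n$. For each periodic operator $W_{\lambda_1,\lambda_2,p/q,\theta}$, Bloch-Floquet decomposition yields a direct integral over a Brillouin zone of $2q\times 2q$ unitaries, and in this finite-dimensional setting Aubry-Andr\'e duality becomes a genuine fiberwise unitary equivalence (essentially the discrete Fourier transform on $\bbZ/q\bbZ$) between the Bloch fibers of $W_{\lambda_1,\lambda_2,p/q}$ and $W^{\aubrydual}_{\lambda_1,\lambda_2,p/q}$. These two periodic operators therefore have identical band spectrum, and hence identical IDS. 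By the weak continuity of the density-of-states measure in the frequency, passing to the limit $p_n/q_n\to\Phi$ transfers the equality to irrational $\Phi$. As an alternative, more direct, route, one could invoke Theorem \ref{thm.reverDual} together with \cite[Theorem 2.4]{CFO23} to obtain a bijection between polynomially bounded generalized eigenfunctions of $W$ and $W^{\aubrydual}$ at any fixed $z$, and verify that this correspondence—implemented by the Fourier transform \eqref{eq:psi_aubr}—preserves the projective dynamics that governs the fibered rotation number.

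The main obstacle is the careful bookkeeping required to identify the Fourier-like duality with an honest fiberwise conjugation of the relevant Szeg\H{o} cocycles, taking into account both the boundary phases that arise in the discrete Fourier transform on $\bbZ/q\bbZ$ and the gauge transformation that turned the original GECMV matrix into a standard extended CMV matrix with real $\rho_n$ via \eqref{eq.uamo_rho_gauged}. Once this identification is pinned down at the level of rational approximants, the passage to the limit is routine thanks to standard continuity properties of the IDS and of the fibered rotation number for quasiperiodic $\mathrm{SU}(1,1)$-cocycles.
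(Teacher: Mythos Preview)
Your high-level strategy coincides with the paper's: both reduce the equality of rotation numbers to the equality of the density-of-states measures and then invoke the Johnson--Moser/Simon relation between the IDS and the fibered rotation number of the Szeg\H{o} cocycle (with the factor $2$ from the two-step structure). The difference lies in how the IDS equality is obtained. You propose to establish it via rational approximation---Bloch--Floquet at each $p_n/q_n$, fiberwise discrete-Fourier conjugacy, continuity of the DOS in the frequency---whereas the paper bypasses all of this by simply invoking \cite[Theorem~5.1]{CFO23}, which already provides a unitary $U$ with $U^*W_{\lambda_1,\lambda_2,\Phi}U=W^\top_{\lambda_2,\lambda_1,\Phi}$ at the level of the \emph{direct integrals} over $\theta$. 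Since the base dynamics is uniquely ergodic, this single unitary equivalence immediately forces $k(\cdot)=k^{\aubrydual}(\cdot)$ without any approximation argument.

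Your route would work, but it is considerably heavier: you would need to verify the fiberwise duality carefully (including the gauge and boundary-phase bookkeeping you flag), and then justify the continuity of the DOS in $\Phi$ for this particular family. The paper's approach buys all of this for free from a result already on the shelf. Your ``alternative, more direct route'' via Theorem~\ref{thm.reverDual} and \cite[Theorem~2.4]{CFO23} is closer in spirit to what the paper does, but the cleanest formulation is the direct-integral unitary equivalence, which you do not mention explicitly.
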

\begin{proof}
According to Theorem 5.1 of \cite{CFO23}, there exists a unitary transformation $U$ such that 
$$U^*W_{\lambda_1,\lambda_2,\Phi}U=W^\top_{\lambda_2,\lambda_1,\Phi},$$
where $W_{\lambda_1,\lambda_2,\Phi}, W^\top_{\lambda_2,\lambda_1,\Phi}$ are the direct integrals of $W_{\lambda_1,\lambda_2,\Phi,\theta}, W^\top_{\lambda_2,\lambda_1,\Phi,\theta}$ respectively. Since the base dynamics $\theta\to\theta+\Phi$ is minimal and uniquely ergodic, the density of states measures of  $W_{\lambda_1,\lambda_2,\Phi,\theta}$ and $W^\top_{\lambda_2,\lambda_1,\Phi,\theta}$ exist and are denoted by $k(\cdot)$ and $k^\aubrydual(\cdot)$ respectively, compare \cite[Chapter 8]{Simon2005OPUC1}. Moreover, they are independent of the specific choice of $\theta$. It follows that $$k(\cdot)=k^\aubrydual(\cdot).$$
By the relation of rotation numbers of Szeg\H{o} cocycles and the density of states measures of its associated CMV matrices, compare \cite[Theorem 8.3.3]{Simon2005OPUC1}, since our cocycle map is obtained by combining two steps,
$$\rot_{\lambda_1,\lambda_2}(e^{i\zeta})=k(\zeta)=k^\aubrydual(\zeta)=\rot^\aubrydual_{\lambda_1,\lambda_2}(e^{i\zeta}),\quad \zeta\in [0,2\pi).$$
\end{proof}
It is well known that the spectrum equals the subset of $\partial \bbD$ where the rotation number is not locally constant, compare \cite{GeronimoJohnson1996JDE}. Therefore, $W_{\lambda_1,\lambda_2,\Phi,\theta}$ and $W^\aubrydual_{\lambda_1,\lambda_2,\Phi,\theta}$ are isospectral. Moreover, $J\subset\partial \bbD$ is a spectral gap of the spectrum of one of them if and only if it is a spectral gap of the other, and the labels agree.

\begin{proof}[Proof of Theorem \ref{thm.Main1}] 
By the discussion above, it is sufficient to consider the case $\lambda_1<\lambda_2$. 
Assume that $z$ is a gap boundary, that is, $2\rot(\Phi,S^{\aubrydual}_z)=k\Phi$ for some $k\in \mathbb{Z}$. 
Then, by Theorem \ref{thm:ARC} and Corollary \ref{cor.subcritical}, $S^\aubrydual_z$ is almost reducible. Since $\Phi$ is Diophantine, by Theorem \ref{Eli92}, there exists $B\in C^{\omega}(\bbT,\mathrm{PSU}(1,1))$ such that 
\begin{equation}\label{eq.conjugte1}
[B(\theta+\Phi)]^{-1}S^{\aubrydual}_{z}(\theta)B(\theta)=A,
\end{equation}
where $A\in\SU(1,1)$ is constant. By Theorem \ref{thm:gapEdgeRd}, $z\in\rot^{-1}(k\Phi/2)$ is unique if and only if $A=\idty$.

Fix $\theta\in\bbT$ and assume that $z$ is a collapsed gap of the spectrum of $W^\top_{\lambda_2,\lambda_1,\Phi,\theta}$ with $\lambda_1<\lambda_2.$ Since the cocycle $(\Phi,S^{\aubrydual}_{z})$ is subcritical by Corollary \ref{cor.subcritical}, \eqref{eq.conjugte1} boils down to 
\begin{equation*}
[B(\theta+\Phi)]^{-1}S^{\aubrydual}_{z}(\theta)B(\theta)=\idty
\end{equation*}
and therefore, by \eqref{eq.ConjugatedTransferMatrix} and \eqref{eq.transposeIter},
\begin{equation*}
[B(\theta+\Phi)]^{-1}(JRP)^{-1}A_{z}(JRP)B(\theta)=\idty.
\end{equation*}
Let $Z(\theta)=JRPB(\theta)$, then 
\begin{equation}\label{eq.blochW}
[Z(\theta+\Phi)]^{-1}A_{z}(\theta)Z(\theta)=\idty.
\end{equation}
Let $Z(\theta)=\left[U(\theta),V(\theta)\right]$, where $U,V\in C^{\omega}(\bbT,\bbC^2)$ are the columns of $Z(\theta)$. It follows from \eqref{eq.blochW} that 
\begin{equation}\label{eq:UVA}
    \left[U(\theta+\Phi),V(\theta+\Phi)\right]=e^{i0}A_z(\theta)\left[U(\theta),V(\theta)\right],
\end{equation}
and thus $\{U(\theta+n\Phi)\}_{n\in\mathbb{\bbZ}}$ and $\{V(\theta+n\Phi)\}_{n\in\bbZ}$ are independent Bloch waves of the generalized eigenvalue equation $W^\top_{\lambda_2,\lambda_1,\Phi,\theta}\psi=z\psi$. By Theorem \ref{thm.reverDual}, these are independent solutions to  $W_{\lambda_1,\lambda_2,\Phi,0}u=zu$. 

Moreover, by \eqref{eq.solIter1}, \eqref{eq.ConjugatedTransferMatrix} and since $P_{2n}$ and $R_{2n}$ are constant for the Verblunsky coefficients of the UAMO specified in \eqref{eq.uamo} and \eqref{eq.uamo_rho_gauged} after gauge transforming, we have $\det A_{n,z}=1$, which implies the constancy of Wronskian.
Note that we smuggled a factor $e^{i0}$ into \eqref{eq:UVA}: since $0$ is $\Phi$-nonresonant, by Theorem \ref{thm:AndersonLocal} the spectrum of $W_{\lambda_1,\lambda_2,\Phi,0}$ for $\lambda_1<\lambda_2$ and $\Phi\in \DC$ is pure point and simple (by constancy of the Wronksian).
This contradicts the assumption that $z$ is unique, hence the gap cannot be collapsed.
\end{proof}
\medskip

\subsection{Proof of Theorem \ref{thm.reverDual}}
	In the AMO setting the analogous result holds by simply Fourier transforming the assumed eigenvalue equation, so let us try the same strategy here. Writing
	\begin{equation}\label{eq:ft}
		\psi_n=\int_{\mathbb T}\frac{dx}{2\pi}e^{-2\pi i nx}\check\psi(x)
	\end{equation}
	we obtain from $W^{\aubrydual}_{\lambda_{1},\lambda_{2},\xi,\Phi}\varphi=W^{\top}_{\lambda_{2},\lambda_{1},\xi,\Phi}\varphi=z\varphi$ and \cite[Lemma 4.2]{CFO23} that
	\begin{align*}
		z\varphi_n^+ & = (\lambda_1\cos(2\pi(n\Phi+\xi))+i\lambda_1')(\lambda_2\varphi_{n+1}^+ +\lambda_2'\varphi_n^-)+\lambda_1\sin(2\pi(n\Phi+\xi))(-\lambda_2'\varphi_n^++\lambda_2\varphi_{n-1}^-)\\
		z\varphi_n^- & = -\lambda_1\sin(2\pi(n\Phi+\xi))(\lambda_2\varphi_{n+1}^++\lambda_2'\varphi_n^-)+(\lambda_1\cos(2\pi(n\Phi+\xi))-i\lambda_1')(-\lambda_2'\varphi_n^++\lambda_2\varphi_{n-1}^-).
	\end{align*}
	Plugging in the concrete form of $\varphi$ from Theorem \ref{thm.reverDual}, Fourier transforming with respect to $x=\xi+n\Phi$ and multiplying by $\exp[-2\pi in\theta]$, the first equation gives
	\begin{align*}
		\int_{\mathbb T}\frac{dx}{2\pi}e^{-2\pi i mx}z\phi^+(x) &= \int_{\mathbb T}\frac{dx}{2\pi}e^{-2\pi i mx}\Big[(\lambda_1\cos(2\pi x)+i\lambda_1')(\lambda_2e^{2\pi i\theta}\phi^+(x+\Phi) +\lambda_2'\phi^-(x))\\
		&\qquad+\lambda_1\sin(2\pi x)(-\lambda_2'\phi^+(x)+\lambda_2e^{-2\pi i\theta}\phi^-(x-\Phi))\Big].
	\end{align*}
	Expanding the trigonometric functions, reorganizing the terms and utilizing \eqref{eq:ft} gives
	\begin{align*}
		z\phi_m^+ &= \frac12\lambda_2\lambda_1\left[e^{2\pi i((m-1)\Phi+\theta)}\phi_{m-1}^+-ie^{-2\pi i((m-1)\Phi+\theta)}\phi_{m-1}^-\right]\\
		&\qquad+\frac12\lambda_2\lambda_1\left[e^{2\pi i((m+1)\Phi+\theta)}\phi_{m+1}^++ie^{-2\pi i((m+1)\Phi+\theta)}\phi_{m+1}^-\right]\\
		&\qquad+\lambda_2\lambda_1'ie^{2\pi i(m\Phi+\theta)}\phi_m^++i\lambda_2'\lambda_1'\phi_m^-\\
		&\qquad+\frac12\lambda_2'\lambda_1\left[\phi_{m-1}^-+i\phi_{m-1}^++\phi_{m+1}^--i\phi_{m+1}^+\right].
	\end{align*}
	Similarly, from Fourier transforming and expanding the trigonometric functions we obtain from the second equation
	\begin{align*}
		z\phi_m^- & = \frac12\lambda_2\lambda_1(-ie^{2\pi i ((m+1)\Phi+\theta)}\phi_{m+1}^++e^{-2\pi i ((m+1)\Phi+\theta)}\phi_{m+1}^-(x))\\
		&\qquad+ \frac12\lambda_2\lambda_1(ie^{2\pi i ((m-1)\Phi+\theta)}\phi_{m-1}^++e^{-2\pi i ((m-1)\Phi+\theta)}\phi_{m-1}^-(x))\\
		&\qquad+\lambda_2\lambda_1'(-ie^{-2\pi i(m\Phi+\theta)}\phi_m^-)+i\lambda_2'\lambda_1'\phi_m^+\\
		&\qquad+\frac12\lambda_2'\lambda_1\left[i\phi_{m-1}^--i\phi_{m+1}^--\phi_{m-1}^+-\phi_{m+1}^+\right]
	\end{align*}
	Linearly combining these expressions as $a\phi^++b\phi^-$ for $(a,b)=(1,-i)$ and $(a,b)=(-i,1)$ we find
	\begin{align*}
		z\psi_m^+&=\lambda_1(\lambda_2\cos(2\pi((m-1)\Phi+\theta))+i\lambda_2')\psi_{m-1}^+-\lambda_2\lambda_1\sin(2\pi((m-1)\Phi+\theta))\psi_{m-1}^-\\
		&\qquad-\lambda_1'(\lambda_2\cos(2\pi(m\Phi+\theta))-i\lambda_2')\psi_m^--\lambda_2\lambda_1'\sin(2\pi(m\Phi+\theta))\psi_m^+\\
		z\psi_m^-&=\lambda_1(\lambda_2\cos(2\pi((m+1)\Phi+\theta))-i\lambda_2')\psi_{m+1}^-+\lambda_2\lambda_1\sin(2\pi((m+1)\Phi+\theta))\psi_{m+1}^+\\
		&\qquad+\lambda_1'(\lambda_2\cos(2\pi(m\Phi+\theta))+i\lambda_2')\psi_m^+-\lambda_2\lambda_1'\sin(2\pi(m\Phi+\theta))\psi_m^-
	\end{align*}
	where we used the definition of $\psi$ in \eqref{eq:psi_aubr}. Comparing with \cite[Lemma 4.1]{CFO23} we see that indeed $W_{\lambda_1,\lambda_2,\Phi,\theta}\psi=z\psi$.
    \hfill\qedsymbol

\section*{Acknowledgements}
The authors thank David Damanik and Jake Fillman for inspiring discussions. L. Li is supported by AMS-Simons Travel Grant 2024-2026.

\bibliographystyle{abbrvArXiv}
	
\bibliography{DryMartini_bib}

\end{document}